\DeclareMathOperator{\Char}{char}
\DeclareMathOperator{\gldim}{gl.dim }
\DeclareMathOperator{\tr}{tr}
\begin{document}


\newcommand{\Lra}{\Leftrightarrow}
\newcommand{\lra}{\longrightarrow}
\newcommand{\Ra}{\Rightarrow}
\newcommand{\F}{{\mathbb F}}
\newcommand{\N}{{\mathbb N}}
\newcommand{\K}{{\mathbb K}}
\newcommand{\C}{{\mathbb C}}
\newcommand{\R}{{\mathbb R}}
\newcommand{\Z}{{\mathbb Z}}
\newcommand{\Q}{{\mathbb Q}}
\newcommand{\E}{{\mathcal E}}
\newcommand{\nL}{{\mathcal L}}
\newcommand{\nNull}{{\mathrm {\mathbf 0}}}
\newcommand{\kg}{{\mathcal G}}
\newcommand{\ki}{{\mathcal I}}
\newcommand{\kv}{{\mathcal V}}
\newcommand{\ox}{{\overline{x}}}
\newcommand{\D}[1] {\partial_{#1}}
\renewcommand{\d}{{\partial}}
\newcommand{\mb}[1] {\mathbf{#1}}
\newcommand{\mT}{{\mathcal T}}
\newcommand{\kD}{{\mathcal D}}
\newcommand{\I}{{\mathcal I}}
\newcommand{\J}{{\mathcal J}}
\newcommand{\kb}{{\mathcal B}}
\newcommand{\A}{\ensuremath{\mathcal{A}}}
\newcommand{\B}{\ensuremath{\mathcal{B}}}
\newcommand{\g}{{\ensuremath{\mathfrak{g}}}}
\newcommand{\f}{{\ensuremath{\mathfrak{f}}}}
\newcommand{\gl}{{\mathfrak{ll}}}
\renewcommand{\t}{{\mathfrak{t}}}
\newcommand{\fI}{{\mathfrak{I}}}
\newcommand{\ra}{\rightarrow}
\newcommand{\GZS}{\operatorname{GZSupp}}
\renewcommand{\L}{{\mathcal{L}}}
\newcommand{\zsupp}{\ensuremath{\operatorname{Supp}}}
\newcommand{\n}{\ensuremath{\mathfrak{n}}}
\newcommand{\m}{\ensuremath{\mathfrak{m}}}
\newcommand{\ann}{\operatorname{Ann}}
\newcommand{\spec}{\operatorname{Spec}}
\newcommand{\Ann}{\operatorname{Ann}}
\newcommand{\rank}{\operatorname{rank}}
\newcommand{\syz}{\operatorname{Syz}}
\newcommand{\Syz}{\operatorname{Syz}}
\newcommand{\LeftSyz}{\operatorname{LeftSyz}}
\newcommand{\id}{\mathbb{I}\text{d}}
\newcommand{\algebraitem}[6] {\rule[-1mm]{0mm}{5mm}#1 & #2 & #3\\ \hline 
\multicolumn{3}{|l|}{\parbox[t]{120mm}{\setlength{\tabcolsep}{0mm}
\begin{tabular}{l@{\quad}rlllll}
 \rule{0mm}{5mm}Isomorphism: & $X$ & $\to\;$ & #4 , \; $Y$ & $\to\;$ & #5
\end{tabular}}}\\ 
\multicolumn{3}{|l|}{\parbox[t]{120mm}{#6}}\\ \hline}


\title{On Two-generated Non-commutative Algebras Subject to the 
  Affine Relation\thanks{
    This paper is contained in the Proceedings of CASC 2011, by Vladimir Gerdt, 
    Wolfram Koepf, Ernst W. Mayr, and Evgenii Vorozhtsov (eds.), Lecture Notes 
    in Computer Science, vol. 6885, ISBN 978-3-642-23567-2.
    The final publication is available at www.springerlink.com.}}
\titlerunning{Two-generated Algebras Subject to the Affine Relation}

\author{Viktor Levandovskyy\inst{1} and Christoph Koutschan\inst{2} and Oleksandr Motsak\inst{3}}
\authorrunning{V.~Levandovskyy, C.~Koutschan and O.~Motsak}
\institute{Lehrstuhl D f\"ur Mathematik, RWTH Aachen, Germany, \\ 
\email{viktor.levandovskyy@math.rwth-aachen.de} 
\and
RISC, Johannes Kepler University, Linz, Austria, \\
\email{Koutschan@risc.jku.at}
\and
TU Kaiserslautern, Germany, \\
\email{motsak@mathematik.uni-kl.de}}

\tocauthor{Viktor Levandovskyy, Christoph Koutschan, Oleksandr Motsak}

\maketitle

\begin{abstract}
We consider algebras over a field $\K$, generated by
two variables $x$ and $y$ subject to the single relation 
$yx = q xy + \alpha x + \beta y + \gamma$ for $q\in\K^*$ and
$\alpha, \beta, \gamma \in \K$. We prove, that among such algebras there are 
precisely five isomorphism classes. The representatives of these classes,
which are ubiquitous operator algebras, are called model algebras.
We derive explicit multiplication formulas for $y^m \cdot x^n$ in terms of standard
monomials $x^i y^j$ for many algebras of the considered type. 
Such formulas are used in e.~g. establishing formulas of binomial type and 
in an implementation of non-commutative multiplication in a computer algebra system. 
By using the formulas we also study centers and ring-theoretic properties of the 
non-commutative model algebras.
\end{abstract}



In this paper we study non-commutative algebras in two generators obeying single affine relation. Many operator algebras, coming from different areas of natural sciences, are built from algebras in two generators, see Sect. \ref{OpAlg} for examples. One of generators, say $x$, often corresponds to the operator of the multiplication with the function $x$. Another operator, say $y$, corresponds to a linear operator, acting on functions in the variable $x$. 

In the main Theorem we identify precisely five types of non-isomorphic algebras, which we call model algebras, among them. Despite the fact that many such algebras have been studied in the literature (see e.~g. \cite{Dix,MR,Mgfun,BGV}, many aspects and properties are too scattered in the existing literature. Another point of this note is to search systematically for closed form of multiplication formulas on monomials. Such closed forms are needed, among other, in computer algebra, where many sophisticated algorithms heavily rely on basic multiplication among monomials. It is not enough to have such formulas just for model algebras, since isomorphisms do not preserve monomials but turn them into polynomials. It turned out, that there are still several cases, where we were not able to derive closed formulas in terms of standard monomials. With our approach, however, one is still able to derive formulas of certain type for them.

\section{Preliminaries}

Let $\K$ be a field. Moreover, let $A$ be an associative $\K$-algebra and $q\in\K^*$. We use the following notations: 
$[a,b]_q := ab - q\cdot ba$ is a \textit{$q$-commutator} of $a,b \in A$. 
The \textit{commutator} or the \textit{Lie bracket} 
is $[a,b] := [a,b]_1 = ab - ba$. 
We also write $[n] = [n]_q = \tfrac{q^n-1}{q-1}$ for the $q$-number, 
$(a;q)_n := \prod_{k=0}^{n-1} (1-aq^k)$ for the $q$-Pochhammer symbol,
$[n]^{\underline{k}} = \tfrac{(q^{n-k+1};q)_k}{(1-q)^k}$ for the $q$-falling factorial
and $\left[ \begin{array}{c} n \\ k \end{array} \right] = 
\left[ \begin{array}{c} n \\ k \end{array} \right]_q = \tfrac{[n]!}{[n-k]![k]!}$ for the $q$-binomial coefficient. Note, that $\left[\begin{array}{c}n\\ k\end{array}\right]=0$ for $k>n$.

\begin{lemma}
 $\forall a,b,c \in A$ and $\lambda,\mu\in\K$ the following identities hold.
\begin{itemize}
\item $[a,b]_q = -q(ba - \frac{1}{q} ab) = -q[b,a]_{q^{-1}}$, \;  $[a,a]_q=(1-q)a^2$ 
\item $[a+\lambda,b]_q = [a,b]_q + \lambda (1-q)b$, \; $[a,b+\mu]_q = [a,b]_q + \mu (1-q)a$
\item $[ab,c]_q = a[b,c]_q + q \cdot [a,c] b = a [b,c] + [a,c]_q b$
\end{itemize}
In particular, $[a,b]  = -[b,a]$ and $[ab,c]=a[b,c]+[a,c]b$.
\end{lemma}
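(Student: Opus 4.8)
The plan is to verify each identity by a direct expansion from the definition $[a,b]_q = ab - q\,ba$, relying only on associativity of the product in $A$ and on the fact that every $\lambda \in \K$ is central in $A$ (so that $a + \lambda$ is read as $a + \lambda\cdot 1_A$ and $\lambda$ commutes with all elements). No induction or structure theory is needed; each line is a one-step rearrangement, and the whole lemma amounts to bookkeeping of the coefficients attached to the four monomials $ab$, $ba$, and their three-letter analogues.

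First I would settle the identities of the first item. For the first equality chain I compute $-q[b,a]_{q^{-1}} = -q(ba - q^{-1}ab) = ab - q\,ba = [a,b]_q$, which simultaneously produces the middle expression $-q(ba - \tfrac1q ab)$ as the unexpanded form of the same quantity. The diagonal case is immediate: $[a,a]_q = a^2 - q\,a^2 = (1-q)a^2$.

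Next the shift formulas of the second item. Expanding, $[a+\lambda,b]_q = (a+\lambda)b - q\,b(a+\lambda) = ab - q\,ba + \lambda b - q\lambda b = [a,b]_q + \lambda(1-q)b$, where centrality of $\lambda$ is used to write $b\lambda = \lambda b$ and to drop it out of the product; the second formula $[a,b+\mu]_q = [a,b]_q + \mu(1-q)a$ follows by the identical computation with the roles of the two arguments exchanged.

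The Leibniz-type rules of the third item are the only place where two superficially different expressions must be shown to coincide, so I would first reduce the left-hand side to a fixed normal form and then match each candidate against it. Writing $[ab,c]_q = abc - q\,cab$, I expand $a[b,c]_q + q[a,c]b = a(bc - q\,cb) + q(ac - ca)b$; the two terms $\mp q\,acb$ cancel, leaving $abc - q\,cab$. Likewise $a[b,c] + [a,c]_q b = a(bc - cb) + (ac - q\,ca)b$ has its two terms $\mp acb$ cancel, again leaving $abc - q\,cab$. The stated specializations $[a,b] = -[b,a]$ and $[ab,c] = a[b,c] + [a,c]b$ are then read off by setting $q = 1$ in items one and three. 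I expect no genuine obstacle: the only point demanding a moment's care is the implicit identification of a scalar $\lambda$ with $\lambda\cdot 1_A$ together with its centrality, which is precisely what makes the second item meaningful.
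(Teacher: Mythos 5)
Your proof is correct: every identity is verified by direct expansion from the definition $[a,b]_q = ab - q\,ba$, using associativity and the centrality of scalars, and all the cancellations you indicate check out. The paper itself offers no proof of this lemma at all, treating it as routine, so your computation is exactly the verification the authors implicitly leave to the reader.
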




\noindent
We study two-generated non-commutative $\K$-algebras with affine relations
\[
A(q,\alpha, \beta, \gamma) := \K \langle x,y \mid yx = q\cdot xy + \alpha x + \beta y + \gamma \rangle
\]
for $q \in \K^{*}$ and $\alpha, \beta, \gamma \in \K$. The scalar $q$ plays an important role and we distinguish two major cases. If $q=1$, an algebra is of \textit{Lie type}, that is it is isomorphic 
to a factor-algebra of the universal enveloping algebra of a finite-dimensional Lie algebra. 
If $q\not=1$, in the research of \textit{quantum algebras} one distinguishes
two situations (which lead to different behaviour of algebras): either $q$ is transcendental over some subfield $k\subset\K$ or $q$ is a root of unity in $\K$. Without assumptions on $q$ we will write $\K(q)$ in general (thus encompassing the case $q\in\K^*$ as well), while in the case $q=1$ just $\K$ will be used.
For $(a,b) \in \N_0^2$ we call an element $x^a y^b$ a (standard) \textbf{monomial}. 
If an algebra $A$ possesses a $\K$-basis, consisting of monomials, the latter basis is often called a \textbf{Poincar\'e-Birkhoff-Witt} basis. 
The following Lemma is well-known.
\begin{lemma}
\label{PBWA}
$A(q,\alpha, \beta, \gamma)$ has $\{x^a y^b \mid (a,b) \in \N_0^2\}$ as a $\K(q)$-basis. 
\end{lemma}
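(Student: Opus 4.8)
The plan is to prove that the set of standard monomials $\{x^a y^b \mid (a,b) \in \N_0^2\}$ spans $A(q,\alpha,\beta,\gamma)$ over $\K(q)$ and is linearly independent, hence forms a basis. The spanning part is the easy direction and I would dispatch it first; linear independence is where the real content lies, and the standard way to obtain it is via Bergman's Diamond Lemma, so the main obstacle is the verification of overlap (confluence) conditions.

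For spanning, I would argue that the free algebra $\K\langle x,y\rangle$ is spanned by all words in $x$ and $y$, and that the defining relation $yx = qxy + \alpha x + \beta y + \gamma$ lets us rewrite any occurrence of the subword $yx$ in terms of words with strictly fewer inversions, where by an \emph{inversion} I mean an adjacent pair $yx$ (equivalently, the number of such ``out of order'' adjacencies). Introducing a suitable well-order on words — say, by length first and then breaking ties so that each application of the rewriting rule $yx \to qxy + \alpha x + \beta y + \gamma$ strictly decreases the order — I would show every word reduces in finitely many steps to a $\K(q)$-linear combination of standard monomials $x^a y^b$. The key point is that the right-hand side $qxy + \alpha x + \beta y + \gamma$ consists of terms each of which is either already sorted or shorter, so the rewriting process terminates; this gives that the standard monomials span.

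For linear independence I would invoke the Diamond Lemma of Bergman. The setup is the reduction system with the single rule above, on the free monoid generated by $x,y$, with the chosen compatible well-order (it must be a semigroup partial order that is also compatible with the rule, and satisfy the descending chain condition — this is exactly the order sketched above). Bergman's theorem then says that if all \emph{ambiguities} are resolvable, the irreducible words (here: the standard monomials) form a $\K(q)$-basis of the quotient algebra. With a single relation whose leading term is $yx$, one checks there are no nontrivial inclusion ambiguities, and the only overlap ambiguities come from words like $yyx$ and $yxx$ where the pattern $yx$ could in principle be reduced at more than one position.

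The hard part will be verifying that these overlap ambiguities are resolvable, i.e.\ that reducing $yyx$ (and symmetrically $yxx$) by the two possible first steps leads, after further reduction, to the same normal form. Concretely, for $yyx$ one reduces the inner $yx$ to get $y(qxy+\alpha x+\beta y+\gamma)$ and continues sorting the resulting $yxy$ and $yy$ terms, while there is genuinely only one place to apply the rule at the outset; the real test is the word where two reduction sites exist, and one must grind the associativity check $(yy)x$ versus $y(yx)$ and $y(xx)$ versus $(yx)x$ to confluence. I expect this to reduce to a polynomial identity in $q,\alpha,\beta,\gamma$ that holds identically — reflecting precisely that the relation is associative-consistent — and once confluence is confirmed, the Diamond Lemma delivers both the PBW basis property and, as a bonus, the well-definedness of the normal form, completing the proof.
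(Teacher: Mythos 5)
Your overall route --- a single rewriting rule $yx \to qxy+\alpha x+\beta y+\gamma$ on the free monoid, a compatible well-order for termination, and Bergman's Diamond Lemma to conclude that the irreducible words $x^a y^b$ form a basis --- is essentially the paper's own proof in different clothing: the paper phrases it as the statement that the singleton $\{-yx+qxy+\alpha x+\beta y+\gamma\}$ is already a two-sided Gr\"obner basis of the ideal it generates in $\K\langle x,y\rangle$, so the factor algebra is spanned freely by the words avoiding $yx$. Your spanning/termination argument (inversions strictly decrease, shorter terms are handled by length) is fine.

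However, the step you single out as ``the hard part'' rests on a misconception, and this is exactly where your proposal has its gap. An overlap ambiguity for a single rule with leading word $yx$ requires a nonempty word that is simultaneously a proper suffix and a proper prefix of $yx$; the only proper suffix is $x$ and the only proper prefix is $y$, so there are no overlap ambiguities at all (and no inclusion ambiguities, there being only one rule). In particular, your candidate words $yyx$ and $yxx$ each contain the subword $yx$ \emph{exactly once} ($yy$ and $xx$ are irreducible), so each admits only one possible first reduction step: there is nothing to ``grind to confluence,'' and no polynomial identity in $q,\alpha,\beta,\gamma$ ever arises. Recognizing this vacuity is precisely the content of the paper's two-line proof: there are no words $u,v$ of degree less than $2$ with $u\cdot yx = yx\cdot v$, hence no generalized $s$-polynomials to reduce. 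Since you deferred the verification with ``I expect this to reduce to a polynomial identity,'' the proposal as written leaves its claimed crux undone and mischaracterized; once the ambiguity set is correctly seen to be empty, the Diamond Lemma applies immediately and your proof closes --- more easily than you anticipated. (Words with two disjoint occurrences of $yx$, such as $yxyx$, do have two reduction sites, but disjoint occurrences are automatically resolvable and are never among the ambiguities one must check.)
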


 Indeed, this Lemma is a consequence of the more general statement, which
 can be easily proved by using Gr\"obner bases in the free associative algebra 
 $\K\langle x,y \rangle$. The latter algebra has a $\K$-basis, consisting of \textit{words},
 that is of elements from the free monoid $F = \langle x,y \rangle$. 
 The empty word from $F$ is written as $1$ in $\K\langle x,y \rangle$. 
 The free monoid $F$ can be totally well-ordered with an ordering, 
 which is compatible with the bilateral multiplication on $F$. We call an ordering,
 having such properties, a \textit{monomial} ordering on $\K\langle x,y \rangle$.
 See e.~g. \cite{Mora,Gr,Ufn98} for the Gr\"obner bases theory for $\K\langle x,y \rangle$.

 \begin{lemma}
 Let $\K\langle x,y \rangle$ be the free associative algebra and $\prec$ be a monomial
 ordering. Consider a polynomial $p = c\cdot yx + q(x,y)$, $c\in\K\setminus\{0\}$ such that every word of $q(x,y)$ is smaller than $yx$ with respect to $\prec$. Then $\{ p \}$ is a two-sided Gr\"obner basis of the two-sided ideal $\langle p \rangle$.
 \end{lemma}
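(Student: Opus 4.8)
The plan is to invoke the non-commutative analogue of Buchberger's criterion, in the form of Bergman's diamond lemma: a subset $G \subset \K\langle x,y\rangle$ is a two-sided Gröbner basis of $\langle G\rangle$ with respect to $\prec$ if and only if every \emph{obstruction} (overlap ambiguity) arising from the leading words of the elements of $G$ reduces to zero modulo $G$. Since here the generating set is the singleton $\{p\}$, the only obstructions that can occur are the self-overlaps of the leading word of $p$ with itself, so the whole proof reduces to enumerating these.

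First I would pin down the leading data of $p$. By hypothesis every word appearing in $q(x,y)$ is strictly $\prec$-smaller than $yx$, whence $\lm(p) = yx$ and $\lc(p) = c \neq 0$; rescaling by $c^{-1}$ makes $p$ monic without altering $\langle p\rangle$ or the reduction relation, so I may assume $\lm(p) = yx$ with leading coefficient $1$.

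Next I would enumerate the self-overlaps of the word $w = yx$. A nontrivial overlap corresponds to a proper nonempty suffix of one copy of $w$ coinciding with a proper nonempty prefix of another copy, the two copies then gluing along the shared block into a single word reducible in two competing ways. But $yx$ has exactly one proper nonempty prefix, namely $y$, and exactly one proper nonempty suffix, namely $x$, and in the free monoid $x \neq y$; hence no such overlap exists. There are likewise no inclusion ambiguities, since $w$ occurs in $w$ only trivially.

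Since the set of obstructions is thus empty, the criterion is satisfied vacuously, and $\{p\}$ is a two-sided Gröbner basis of $\langle p\rangle$. The only point demanding care — and the step I would treat most attentively — is the precise formulation of the obstruction set in the two-sided free-algebra setting, where one must account for overlaps on both the left and the right; once that bookkeeping is in place, the fact that the length-two word $yx$ has distinct first and last letters makes the absence of overlaps, and hence the conclusion, immediate.
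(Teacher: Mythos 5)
Your proof is correct and takes essentially the same route as the paper's: both reduce the claim, via the non-commutative Buchberger/diamond-lemma criterion, to the absence of self-overlap (and inclusion) ambiguities of the leading word $yx$, which holds because its only proper nonempty prefix $y$ and only proper nonempty suffix $x$ are distinct letters. The paper phrases this same bookkeeping as the non-existence of words $u,v$ of degree less than $2$ with $u\cdot yx = yx\cdot v$, so no generalized $s$-polynomials arise.
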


 \begin{proof}
 Since there are no words $u,v\in F$ of degree less than 2, such that $u\cdot yx=yx \cdot v$ holds, 
 there are no generalized $s$-polynomials in the free algebra~\cite{Gr}. Hence the set $\{ p \}$ is already a two-sided Gr\"obner basis.
 \end{proof}

 Writing $A(q,\alpha, \beta, \gamma) \cong \K\langle x,y \rangle / \langle -yx + q\cdot xy + \alpha x + \beta y + \gamma  \rangle$, we see that the previous Lemma proves Lemma \ref{PBWA}. Consider a monomial ordering, which satisfies (a) $w \prec u$ implies $\deg w \leq \deg u$ and (b) $xy \prec yx$, for instance, degree right lexicographical ordering. Then by the previous Lemma $\{ -yx + q\cdot xy + \alpha x + \beta y + \gamma \}$ is a two-sided Gr\"obner basis, hence, the basis of the factor algebra 
 $A(q,\alpha, \beta, \gamma)$ is spanned by all words
 which do not contain $yx$ as a subword. And such words are precisely the standard monomials.

A product of two monomials is, in general, not a monomial, but a
polynomial, that is a sum of monomials.

\section{Main Theorem and Applications}

\begin{theorem}
\label{MainThm}
$A(q,\alpha, \beta, \gamma)$ is isomorphic to one of the five \textbf{model algebras}:
\begin{enumerate}
\item the commutative algebra $\K[x,y]$,
\item the first Weyl algebra $A_1 = \K\langle x,d \mid dx = xd +1 \rangle$ (the algebra of linear differential operators with coefficients from $\K[x]$),
\item the shift algebra $S_1 = \K\langle x,s \mid sx = xs + s \rangle$ (the universal enveloping algebra of the non-abelian solvable two-dimensional Lie algebra; the algebra of linear shift operators with coefficients from $\K[x]$),
\item[]
\item the $q$-commutative algebra $\K_q[x,y]:=\K(q) \langle x,y \mid yx = q\cdot xy \rangle$ 
(Manin's quantum plane; the algebra of linear $q$-shift operators with coeff's from $\K(q)[x]$) 
\item the first $q$-Weyl algebra $A_1^{(q)} = \K(q) \langle x,\partial \mid \partial x = q \cdot x \partial +1 \rangle$ (the algebra of linear $q$-differential operators with coefficients from $\K(q)[x]$).
\end{enumerate}
Moreover, the model algebras are pairwise non-isomorphic (see Prop. \ref{Modelz}).
\end{theorem}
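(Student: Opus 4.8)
The plan is to realize every isomorphism by an \emph{affine change of the two generators}. Any assignment $x\mapsto a_1 x+b_1 y+c_1$, $y\mapsto a_2 x+b_2 y+c_2$ with invertible linear part $\bigl(\begin{smallmatrix}a_1&b_1\\a_2&b_2\end{smallmatrix}\bigr)$ is an automorphism of the free algebra $\K\langle x,y\rangle$ (its inverse is again affine), and such an automorphism carries the two-sided ideal generated by the defining relation $r=-yx+q\cdot xy+\alpha x+\beta y+\gamma$ onto the ideal generated by its image. Hence it descends to an isomorphism of the quotient algebras, and it suffices to transform $r$, up to a nonzero scalar, into one of the five model relations. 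Rewriting $r$ as $[y,x]_q=\alpha x+\beta y+\gamma$, I track the effect of the shifts $x\mapsto x+b$, $y\mapsto y+a$ by means of the identities $[y+a,x]_q=[y,x]_q+a(1-q)x$ and $[y,x+b]_q=[y,x]_q+b(1-q)y$ from the first Lemma; the recurring factor $(1-q)$ is exactly what forces the split into the cases $q\neq1$ and $q=1$.

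\emph{Case $q\neq1$.} Here $1-q$ is invertible, so applying the two shift identities with $a=\alpha/(1-q)$ and $b=\beta/(1-q)$ produces new generators $x',y'$ for which both linear terms vanish, leaving $[y',x']_q=\gamma'$ for a single constant $\gamma'\in\K$. If $\gamma'=0$ the relation is $y'x'=q\cdot x'y'$, i.e.\ the quantum plane $\K_q[x,y]$. If $\gamma'\neq0$, the linear rescaling $\partial:=y'/\gamma'$ gives $[\partial,x']_q=1$, i.e.\ $\partial x'=q\cdot x'\partial+1$, which is the $q$-Weyl algebra $A_1^{(q)}$.

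\emph{Case $q=1$.} Now $[\cdot,\cdot]_q$ is the ordinary commutator and the shifts leave the linear part $\alpha x+\beta y$ untouched, changing only the constant term to $\alpha b+\beta a+\gamma$. If $(\alpha,\beta)=(0,0)$ the relation is $[y,x]=\gamma$: for $\gamma=0$ we obtain the commutative polynomial ring $\K[x,y]$, and for $\gamma\neq0$ the rescaling $d:=y/\gamma$ yields $[d,x]=1$, the Weyl algebra $A_1$. If $(\alpha,\beta)\neq(0,0)$, one shift (solving $\alpha b+\beta a=-\gamma$) removes the constant, leaving $[y,x]=\alpha x+\beta y$, the enveloping algebra of the unique non-abelian two-dimensional Lie algebra. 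To reach the shift relation I set $u:=\alpha x+\beta y$, a spanning vector of the one-dimensional derived subalgebra; the identities then give $[y,u]=\alpha u$ and $[x,u]=-\beta u$, so for $v:=px+ry$ one has $[v,u]=(\alpha r-\beta p)u$. Choosing $(p,r)$ with $\alpha r-\beta p=-1$ makes $\{u,v\}$ linearly independent and yields $[u,v]=u$, i.e.\ $uv=vu+u$; renaming $s:=u$ and keeping $v$ as the second generator gives the shift relation $sv=vs+s$, so the algebra is $S_1$.

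The genuinely routine-but-necessary point is to check in each case that the substitution really is an \emph{invertible} endomorphism of $\K\langle x,y\rangle$, so that the induced map on quotients is an isomorphism and not merely a surjection; this is immediate because the affine maps with invertible linear part form a group. The only step that is not purely mechanical is the $q=1$, $(\alpha,\beta)\neq(0,0)$ reduction to $S_1$: it relies on the structural observation that the derived subalgebra is spanned by $u=\alpha x+\beta y$ and on the eigen-relations $[y,u]=\alpha u$, $[x,u]=-\beta u$ that single out the correct linear combination. I expect this to be the main obstacle; the remaining normalizations are direct. Finally, the ``moreover'' clause---that the five model algebras are pairwise non-isomorphic---is independent of this reduction and is treated separately in Proposition~\ref{Modelz}, so it is not part of the argument above.
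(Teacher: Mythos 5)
Your proof is correct, and it rests on the same mechanism the paper uses: affine substitutions of the generators with invertible linear part, which are automorphisms of $\K\langle x,y\rangle$ and therefore carry the ideal of the defining relation onto the ideal of the transformed relation. The difference is in how the argument is organized, and the difference is worth noting. The paper never writes a uniform reduction: the isomorphism claims of the Theorem are embodied in Tables 1 and 2, which list an explicit isomorphism separately for each parameter pattern $(q,\alpha,\beta,\gamma)$, while the text inside the paper's proof environment is devoted to verifying the multiplication formulas by induction, not the isomorphisms themselves. You instead derive everything from two computations: for $q\neq 1$, shifts of both generators kill the linear terms and reduce the relation to $[y',x']_q=\gamma'$ with $\gamma'=\gamma+\alpha\beta/(1-q)$, after which a rescaling settles the dichotomy quantum plane versus $q$-Weyl; for $q=1$ and $(\alpha,\beta)\neq(0,0)$, a shift kills the constant and the observation that $u=\alpha x+\beta y$ spans the derived subalgebra, together with $[v,u]=(\alpha r-\beta p)u$ for $v=px+ry$, produces the shift relation after one determinant normalization. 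Your uniform treatment even buys a genuine correction: the case split on $\gamma'$ shows that for $q\neq 1$ with $\gamma(1-q)+\alpha\beta=0$ the algebra is the quantum plane $\K_q[x,y]$, not the $q$-Weyl algebra --- for instance $A(2,1,1,1)$ satisfies $(y+1)(x+1)=2(x+1)(y+1)$ --- whereas the paper's table entry for the general type $(q,\alpha,\beta,\gamma)$ assigns it to $A_1^{(q)}$ via a map involving the factor $(\gamma(1-q)+\alpha\beta)^{-1}$, which is undefined in precisely this degenerate case. What the paper's case-by-case tables buy in exchange are fully explicit isomorphisms for each type, which feed directly into its multiplication formulas and the computer-algebra implementation; your reduction yields such maps only implicitly, by composing the successive substitutions.
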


In Tables 1 and 2
we write isomorphisms to model algebras 
and write formulas for the multiplication in every concrete
class of algebras. In some cases we also write down the recurrence
formulas for the coefficients in the expansion of $y^m x^n$ in terms
of standard monomials $x^a y^b$. For some algebras we put simpler
formulas for $y^m x$ and $y x^n$ as well as a part of our proof.

By writing \textbf{not known yet} in the table we mean, that up to now, no
explicit formula in terms of of standard monomials is known to us. 
However, by applying an isomorphism (for instance, the one we give explicitly in the table)
to the explicit multiplication formula of the corresponding model algebra (Algebra Class in the table),
we obtain a non-expanded formula for any algebra in the table.

\begin{table}
\label{tabLie}
\caption{Multiplication Formulas for Algebras of Lie Type}
\begin{center}
\begin{tabular}{|c|c|c|}
\hline
Algebra Type  &  Relation & Algebra Class\\
\hline\hline
\algebraitem{$(1,0,0,0)$}
{$yx=xy$}
{$YX=XY$}
{$x$}{$y$}
{$\displaystyle y^mx^n=x^ny^m$}

\algebraitem{$(1,\alpha,0,0)$}
{$yx=xy+\alpha x$}
{$YX=XY+Y$}
{$-\alpha^{-1}y$}{$x$}
{$\displaystyle y^mx^n=x^n(y+n \alpha)^m=\sum_{k=0}^m \binom{m}{k} (n\alpha)^{m-k} x^{n} y^{k}$,\\
Coeff. recurrence: $\displaystyle C_k = \frac{(k+1)n\alpha}{m-k} C_{k+1}$}

\algebraitem{$(1,0,\beta,0)$}
{$yx=xy+\beta y$}
{$YX=XY+Y$}
{$\beta^{-1}x$}{$y$}
{$\displaystyle y^mx^n=(x+m\beta)^ny^m=\sum_{k=0}^n \binom{n}{k}(m\beta)^{n-k}x^ky^m$,\\
Coeff. recurrence: $\displaystyle C_k = \frac{(k+1)m\beta}{n-k} C_{k+1}$}

\algebraitem{$(1,\alpha,\beta,0)$}
{$yx=xy+\alpha x+\beta y$}
{$YX=XY+Y$}
{$-\alpha^{-1}y$}{$\alpha x+\beta y$}
{$\displaystyle yx^n=\frac{1}{\beta}\big((x+\beta)^n(\alpha x+\beta y)-\alpha x^{n+1}\big)$,\;
$\displaystyle y^mx=\frac{1}{\alpha}\big((\alpha x+\beta y)(y+\alpha)^m-\beta y^{m+1}\big)$,\\
$\displaystyle y^mx^n=$ \; not known yet}


\algebraitem{$(1,0,0,\gamma)$}
{$yx=xy+\gamma$}
{$YX=XY+1$}
{$x$}{$\gamma^{-1}y$}
{$\displaystyle yx^n=x^{n-1}(xy+n\gamma)$,\;
$\displaystyle y^mx=(xy+m\gamma)y^{m-1}$,\\
$\displaystyle y^mx^n=\sum_{k=0}^n \binom{m}{k}n^{\underline{k}}\gamma^kx^{n-k}y^{m-k}
=\sum_{k=0}^{\min\{m,n\}}\frac{m!n!\gamma^kx^{n-k}y^{m-k}}{k!(m-k)!(n-k)!}$,\\
Coeff. recurrence: $\displaystyle C_k=\frac{(m-k+1)(n-k+1)\gamma}{k}C_{k-1}$}

\algebraitem{$(1,\alpha,0,\gamma)$}
{$yx=xy+\alpha x+\gamma$}
{$YX=XY+Y$}
{$-\alpha^{-1}y$}{$\alpha x+\gamma$}
{$\displaystyle yx^n=x^ny+nx^{n-1}(\alpha x+\gamma)$,\;
$\displaystyle y^mx=\frac{1}{\alpha}\big((\alpha x+\gamma)(y+\alpha)^m-\gamma y^m\big)$,\\
$\displaystyle y^mx^n=\frac{1}{\alpha^n}\sum_{i=0}^n \binom{n}{i}(-\gamma)^{n-i}(\alpha x+\gamma)^i(y+i\alpha)^m$}

\algebraitem{$(1,0,\beta,\gamma)$}
{$yx=xy+\beta y+\gamma$}
{$YX=XY+Y$}
{$\beta^{-1}x$}{$\beta y+\gamma$}
{$\displaystyle y^mx=xy^m+my^{m-1}(\beta y+\gamma)$,\;
$\displaystyle yx^n=\frac{1}{\beta}\big((x+\beta)^n(\beta y+\gamma)-\gamma x^n\big)$,\\
$\displaystyle y^mx^n=\frac{1}{\beta^m}\sum_{i=0}^m \binom{m}{i}(-\gamma)^{m-i}(x+i\beta)^n(\beta y+\gamma)^i$}

\algebraitem{$(1,\alpha,\beta,\gamma)$}
{$yx=xy+\alpha x+\beta y+\gamma$}
{$YX=XY+Y$}
{$-\alpha^{-1}y$}{$\alpha x+\beta y+\gamma$}
{$\displaystyle y^mx^n=$ \; not known yet}

\end{tabular}
\end{center}
\end{table}

\begin{table}
\label{tabQ}
\caption{Multiplication Formulas for Quantum Algebras}
\begin{center}
\begin{tabular}{|c|c|c|}
\hline
Algebra Type  &  Commutation & Algebra Class\\
\hline\hline

\algebraitem{$(q,0,0,0)$}{$yx=qxy$}{$YX=qXY$}{$x$}{$y$}
{$\displaystyle y^mx^n=q^{mn}x^ny^m$}

\algebraitem{$(q,\alpha,0,0)$}{$yx=qxy+\alpha x$}{$YX=qXY$}{$x$}{$y-\alpha(1-q)^{-1}$}
{$\displaystyle y^mx^n=x^n(q^ny+[n]\alpha)^m$}

\algebraitem{$(q,0,\beta,0)$}{$yx=qxy+\beta y$}{$YX=qXY$}{$x-\beta(1-q)^{-1}$}{$y$}
{$\displaystyle y^mx^n=(q^mx+[m]\beta)^ny^m$}

\algebraitem{$(q,\alpha,\beta,0)$}{$yx=qxy+\alpha x+\beta y$}{$YX=qXY$}{$x-\beta(1-q)^{-1}$}{$y-\alpha(1-q)^{-1}$}
{$\displaystyle y^mx=x(qy+\alpha)^m+\beta\sum_{k=1}^my^k\alpha^{m-k}\sum_{i=0}^{k-1}\binom{m-k+i}{i}q^i$,\\
$\displaystyle y^mx^n=$ \; not known yet}

\algebraitem{$(q,0,0,\gamma)$}{$yx=qxy+\gamma$}{$YX=qXY+1$}{$x$}{$\gamma^{-1}y$}
{$\displaystyle y^mx^n=\sum_{k=0}^n\left[\begin{array}{c}m\\ k\end{array}\right][n]^{\underline{k}}
q^{(n-k)(m-k)}\gamma^kx^{n-k}y^{m-k}$.}

\algebraitem{$(q,\alpha,0,\gamma)$}{$yx=qxy+\alpha x+\gamma$}{$YX=qXY+1$}
{$\gamma^{-1}x$}{$y-\alpha(1-q)^{-1}$}
{$\displaystyle y^mx^n=\sum_{k=0}^n\sum_{j=0}^{m-k}\left[\begin{array}{c}n\\ k\end{array}\right]
\gamma^k\left(\frac{\alpha}{1-q}\right)^{m-j-k}c_{j,k,m,n}x^{n-k}y^j$,\\
where $\displaystyle c_{j,k,m,n}=
\sum_{i=0}^{m-j-k}(-1)^i\binom{m}{i+j+k}\binom{i+j}{j}[i+j+k]^{\underline{k}}q^{(i+j)(n-k)}$}

\algebraitem{$(q,0,\beta,\gamma)$}{$yx=qxy+\beta y+\gamma$}{$YX=qXY+1$}
{$x-\beta(1-q)^{-1}$}{$\gamma^{-1}y$}
{$\displaystyle y^mx^n=\sum_{k=0}^n\sum_{j=0}^{n-k}\left[\begin{array}{c}m\\ k\end{array}\right]
\gamma^k\left(\frac{\beta}{1-q}\right)^{n-j-k}c_{j,k,m,n}x^jy^{m-k}$,\\
where $\displaystyle c_{j,k,m,n}=
\sum_{i=0}^{n-j-k}(-1)^i\binom{n}{i+j+k}\binom{i+j}{j}[i+j+k]^{\underline{k}}q^{(i+j)(m-k)}$}

\algebraitem{$(q,\alpha,\beta,\gamma)$}{$yx=qxy+\alpha x+\beta y+\gamma$}{$YX=qXY+1$}
{$x-\beta (1-q)^{-1}$}{$((1-q) y - \alpha)(\gamma (1-q) + \alpha \beta)^{-1}$}
{$\displaystyle y^mx^n=$ \; not known yet}

\end{tabular}
\end{center}
\end{table}

\begin{proof}
While for some of the above cases the explicit formulas for $y^mx^n$ are rather simple
(and therefore easily found), others are quite complicated and required some work.
A good strategy for finding a general formula for $y^mx^n$ is to study the special
cases $yx^n$ and $y^mx$ first. Once this is done, further multiplications by $y$
(and $x$, respectively) lead to the general formula. However, for the most general
commutation rules (e.g., $yx=xy+\alpha x+\beta y+\gamma$), this strategy fails. 

All the formulas for $y^mx$, $yx^n$, and $y^mx^n$ have in common that they are 
easily proved by induction. As an example, consider the algebra $(1,0,\beta,\gamma)$.
We have stated above that $y^mx=xy^m+my^{m-1}(\beta y+\gamma)$. For $m=1$ this 
reduces just to the given commutation relation $yx=xy+\beta y+\gamma$. Now
consider $y^{m+1}x=y\cdot(y^mx)$ which by induction hypothesis is 
$yxy^m+my^m(\beta y+\gamma)=(xy+\beta y+\gamma)y^m+my^m(\beta y+\gamma)$
which after collecting powers gives the desired formula $xy^{m+1}+(m+1)y^m(\beta y+\gamma)$.
Similarly, the general formula
\[
  y^mx^n=\frac{1}{\beta^m}\sum_{i=0}^m\binom{m}{i}(-\gamma)^{m-i}(x+i\beta)^n(\beta y+\gamma)^i
\]
can be shown (now we use induction on~$n$). 
A straightforward calculation shows that this formula for $n=1$ reduces
to the one given above for $y^m x$.
Thus it has to be investigated what happens
after multiplying another~$x$ from the right:
\begin{eqnarray*}
(\beta y+\gamma)^ix & = & \sum_{j=0}^i\binom{i}{j}(\beta y)^j\gamma^{i-j}x
 =  \sum_{j=0}^i\binom{i}{j}\beta^j\left(xy^j+jy^{j-1}(\beta y+\gamma)\right)\gamma^{i-j}\\
& = & x\sum_{j=0}^i\binom{i}{j}\beta^jy^j\gamma^{i-j}+\beta\left(\sum_{j=0}^i\binom{i-1}{j-1}i(\beta y)^{j-1}\gamma^{i-j}\right)(\beta y+\gamma)\\
& = & x(\beta y+\gamma)^i + \beta i(\beta y+\gamma)^{i-1}(\beta y+\gamma)\\
& = & (x+\beta i)(\beta y+\gamma)^i
\end{eqnarray*}
\end{proof}

We have additionally checked the validity of the formulas above with our respective implementations in
computer algebra systems \textsc{Singular:Plural} \cite{LS} and \textsc{Mathematica} \cite{HolFun}.

\subsection{Operator Algebras and Model Algebras}
\label{OpAlg}

Fix a constant $c \in\K^*$. Then the $c$-\textit{shift operator} acts as $s_c(f(x)) = f(x-c)$.
The corresponding $c$-\textit{shift algebra} is $\K \langle x,s_c \ \mid \ s_c \cdot x = x \cdot s_c - c s_c \rangle$.
For $c=1$ one recovers discrete shift operator. If $c <0$ (resp. $c >0$), $s_c$ is called an \textit{advance operator} (resp. a \textit{time-delay operator}) in both discrete and continuous settings. The corresponding algebras are of the type $(1,0,\beta=-c,0)$ and thus they are
isomorphic to $\K\langle X,Y \mid YX = XY + Y \rangle$, the model shift algebra.

Let $c = (c_1,c_2)$ for $c_i\in\K^*$. The $c$-\textit{difference operator} acts as $\Delta_c(f(x)) = \tfrac{f(x+c_1)-f(x)}{c_2}$.
The corresponding $c$-\textit{difference algebra} is
\[
\K \langle x,\Delta_c \ \mid \ \Delta_c \cdot x = x \cdot \Delta_c + c_1 \Delta + \tfrac{c_1}{c_2} \rangle.
\]
For $c=(1,1)$ one recovers discrete difference operator; for $c=(\triangle x, \triangle x)$ the first-order divided difference operator. The corresponding algebras are of the type $(1,0,\beta=c_1,\gamma=c_1 c_2^{-1})$ and hence they are isomorphic to $\K\langle X,Y \mid YX = XY + Y \rangle$, the model shift algebra.


Following Chyzak and Salvy \cite{Mgfun}, the $q$-\textit{dilation}
and $q$-\textit{shift} operators give rise to the same operator algebra, the $q$-commutative model algebra $\K_q[x,y]$. Both continuous and discrete $q$-\textit{difference} operators \cite{Mgfun} give rise to the algebra 
$\K(q)\langle x,y \mid yx=qxy + (q-1)x\rangle$ of the type $(q,\alpha=q-1,0,0)$. Hence it is
isomorphic to the $q$-commutative model algebra $\K_q[x,y]$. 

Let $c = (c_1,c_2)$ for $c_i\in\K(q)^*$ with $q^{c_i}\neq 0$. The $c$-$q$-\textit{differential operator} acts as $\Delta^{(q)}_c(f(x)) = \tfrac{f(q^{c_1}x)-f(x)}{(q^{c_2}-1)x}$.
The corresponding $c$-$q$-\textit{differential algebra} is
\[
\K(q) \langle x,\Delta^{(q)}_c \ \mid \ \Delta_c \cdot x = q^{c_1} x \cdot \Delta_c + (q^{c_1}-1) \cdot (q^{c_2}-1)^{-1} \rangle.
\]
For $c=(1,1)$ one recovers the $q$-\textit{differential operator} $D_q(f(x)) = \tfrac{f(qx)-f(x)}{qx -x}$. Otherwise, we use Table 2 and by sending $x\to X, \Delta^{(q)}_c \to Y:=(q^{c_2}-1)(q^{c_1}-1)^{-1} \Delta^{(q)}_c$ we obtain the isomorphic algebra $\K(q) \langle X,Y \mid YX = q^{c_1} XY + 1 \rangle$. 
Let $\tilde{q}=q^{c_1}$, then the subalgebra $\K(\tilde{q}) \langle X,Y \mid YX = \tilde{q} XY + 1 \rangle$ of the previous algebra is the first $\tilde{q}$-Weyl model algebra.

Consider the differentiation $y=\tfrac{d}{dt}$ and the operator $x(f(t)) := e^{\lambda t} \cdot f(t)$ for $\lambda\in\K^*$. Then the algebra, generated by $x,y$ has the relation 
$yx = xy + \lambda x$ and it is isomorphic to the model shift algebra.

Of course, there are operators obeying relations, which are not affine. Consider the \textit{integration} operator $I(f(x)):= \int_0^x f(t) dt$. Its relation with $x$ reads as $Ix = xI - I^2$.
Similarly, let $x = t^{-1}$ and $y=\tfrac{d}{dt}$. Then the relation is 
$yx = xy - x^2$. Both algebras can be realized as $G$-algebras. It is interesting to study 
 model algebras for non-affine relations.

\begin{remark}
Note, that isomorphy of $q$-shift and $q$-commutative algebras does not
have an analogue in the classical situation, since for $q=1$ the model shift algebra is
not isomorphic to the model commutative algebra. Thus the following question arises: is there a quantum algebra (clearly, with non-affine relation), which becomes shift model algebra in the limit $q\to 1$?
\end{remark}

\subsection{Binomial Theorems}

Notation: in a noncommutative algebra $A$, for two elements $a,b \in A\setminus\K$, we define $[a+b]^n := \sum_{i=0}^n \binom{n}{i} a^i b^{n-i}$. 
Respectively, we define 
$[a+b]_q^n := \displaystyle \sum_{i=0}^n \left[ \begin{array}{c} n \\ i \end{array} \right]_q a^i b^{n-i}$. 
Then, if $x,y$ commute, one expresses
the binomial theorem as $(x+y)^n = [x+y]^n$. Respectively, if $yx = qxy$, we obtain
$(x+y)^n = [x+y]_q^n$.

Using the formulas obtained above, we can provide formulas of binomial type,
which are important in applications. Among the variety of possible presentations
in such formulas we aim at those, which express $(x+y)^n$ in terms of standard monomials $x^i y^j$.

In the free associative algebra $\K\langle a,b \rangle$, we can write 
$(a+b)^n = \sum_{w\in \langle a,b \rangle_n} w$, that is $w$ run through all words
of length $n$ in the free monoid $\langle a,b \rangle$. One defines a \textit{misordering
index} \cite{BGV} of $w$ to be the number of operations, each of them exchanges two neighbour 
non-equal letters, needed to move all $a$'s to the left (thus finishing when a standard monomial has been achieved), 
starting from the last letter in $w$. For example, the misordering index of a standard monomial
is $0$, while the misordering index of $bbbab$ is $3$, since the sequence of exchange operations
is $bbbab,bbabb,babbb,abbbb$. We say also that $bbbab$ \textit{converges} to $abbbb$ here. 
It is known, that in any algebra $A(q,\alpha, \beta, \gamma)$ the leading monomial of a polynomial 
$y^m\cdot x^n$ is $x^n y^m$. Hence, the coefficients of a standard monomial $x^a y^b$ of degree $a+b$
will appear from the multiplication, applied on every word, which converges to $x^a y^b$. And closed 
formulas for multiplication allow to perform this task symbolically.

\begin{lemma}
Let $A = A_1$ be the first Weyl algebra, where $\d x = x \d + 1$ holds. 
Then the following binomial theorem takes place:
$$(x+d)^n-[x+d]^n=\sum_{k=0}^{n-2}\sum_{j=0}^{n-k-2}\binom{n}{j}\binom{n-j}{k}g(n-j-k)x^kd^j$$
where $g(n):=(n-1)!!$, if $n$ is even and $0$ otherwise.
Alternatively we can write\\
$\displaystyle (x+d)^n-[x+d]^n=\sum_{0\leq k\leq n-2}\sum_{\genfrac{}{}{0pt}{2}{0\leq j\leq n-k-2}{n-j-k\mathrm{\ even}}}
\binom{n}{j}\binom{n-j}{k}(n-j-k-1)!!x^kd^j$\\
$\displaystyle =\sum_{0\leq k\leq n-2}\sum_{\genfrac{}{}{0pt}{2}{0\leq j\leq n-k-2}{n-j-k\mathrm{\ even}}}
\frac{n!}{j!k!(\frac{n-j-k}{2})!}\left(\frac{1}{2}\right)^{\frac{n-j-k}{2}}x^k d^j$.
\end{lemma}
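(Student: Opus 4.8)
The plan is to reduce everything to finding the \emph{normal-ordered} (standard-monomial) form of $(x+d)^n$, and then to recognize $[x+d]^n$ as precisely the top-degree part of that expansion. Since $dx=xd+1$ means $[x,d]=xd-dx=-1$ is central, the cleanest route is through the generating function $\sum_{n\ge0}\frac{t^n}{n!}(x+d)^n = e^{t(x+d)}$, computed in the formal power series ring $A_1[[t]]$. Every coefficient of a fixed power of $t$ is a well-defined element of $A_1$, so these exponentials are legitimate.

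First I would invoke the special case of the Baker--Campbell--Hausdorff formula valid whenever $[A,B]$ is central, namely $e^{A+B}=e^A e^B e^{-\frac12[A,B]}$. Taking $A=tx$ and $B=td$ gives $[A,B]=t^2[x,d]=-t^2$, hence
\[
  e^{t(x+d)} = e^{tx}\,e^{td}\,e^{t^2/2} = e^{t^2/2}\,e^{tx}\,e^{td},
\]
the last step because $e^{t^2/2}$ is a central scalar series. The right-hand side is already normal ordered: multiplying out $\bigl(\sum_p \tfrac{t^{2p}}{2^p p!}\bigr)\bigl(\sum_a \tfrac{t^a}{a!}x^a\bigr)\bigl(\sum_b \tfrac{t^b}{b!}d^b\bigr)$ and comparing the coefficient of $t^n$ on both sides yields
\[
  (x+d)^n = \sum_{2p+a+b=n}\frac{n!}{2^p\,p!\,a!\,b!}\,x^a d^b .
\]
As a fully elementary alternative, this closed form can be verified by induction on $n$: from $(x+d)x^a d^b = x^{a+1}d^b + x^a d^{b+1} + a\,x^{a-1}d^b$ one reads off the recurrence $c^{(n+1)}_{a,b}=c^{(n)}_{a-1,b}+c^{(n)}_{a,b-1}+(a+1)c^{(n)}_{a+1,b}$, which the coefficients above satisfy.

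Now the terms with $p=0$ are exactly $\sum_{a+b=n}\binom{n}{a}x^a d^b = [x+d]^n$, so subtracting them leaves the sum over $p\ge1$. Re-indexing with $k=a$, $j=b$ and $m:=n-j-k=2p$, the surviving coefficient of $x^k d^j$ is $\frac{n!}{2^{m/2}(m/2)!\,k!\,j!}$, and the constraint $p\ge1$ becomes $m\ge2$, i.e. $0\le k\le n-2$, $0\le j\le n-k-2$, with $m$ even. This is literally the third displayed expression. To reach the middle (double-factorial) form I would use $\binom{n}{j}\binom{n-j}{k}=\frac{n!}{j!\,k!\,m!}$ together with the identity $(m-1)!!=\dfrac{m!}{2^{m/2}(m/2)!}$ for even $m$; the vanishing of $g$ on odd arguments then lets the summation ranges be written without the explicit parity side-condition.

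The conceptual content is entirely in the two displayed identities; the remainder is bookkeeping. The step most in need of care is the justification of the exponential manipulations --- proving (or citing) the central-commutator BCH identity and confirming that all series live in $A_1[[t]]$ --- and, if one instead prefers the inductive route, checking that the closed form genuinely satisfies the three-term recurrence. The final reconciliation with the double-factorial form is routine once the identity $(m-1)!!=m!/(2^{m/2}(m/2)!)$ is in hand.
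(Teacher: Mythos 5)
Your proof is correct, but it takes a genuinely different route from the paper, which in fact omits the proof entirely, remarking only that it ``can be done by induction, using the multiplication formulas'' --- i.e.\ by inducting on $n$ and expanding $d^m x^n$ via the closed formula for the algebra of type $(1,0,0,\gamma)$ from Table~1. Your approach instead computes the full normal-ordered expansion of $(x+d)^n$ in one stroke: the central-commutator BCH identity $e^{t(x+d)}=e^{t^2/2}e^{tx}e^{td}$ in $A_1[[t]]$ gives
\[
(x+d)^n=\sum_{2p+a+b=n}\frac{n!}{2^p\,p!\,a!\,b!}\,x^a d^b,
\]
after which $[x+d]^n$ is identified as the $p=0$ stratum and the lemma follows by re-indexing, using $(m-1)!!=m!/\bigl(2^{m/2}(m/2)!\bigr)$ for even $m$. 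This is more conceptual than the suggested induction: it explains structurally why double factorials appear (they count the perfect matchings absorbed by the commutator), it proves both displayed forms simultaneously, and your fallback verification of the three-term recurrence $c^{(n+1)}_{a,b}=c^{(n)}_{a-1,b}+c^{(n)}_{a,b-1}+(a+1)c^{(n)}_{a+1,b}$ supplies the elementary induction the paper had in mind anyway. One small point you should make explicit: the exponential manipulations (and the factorials in denominators) presuppose characteristic $0$; since the final identity has integer coefficients $\binom{n}{j}\binom{n-j}{k}(n-j-k-1)!!$, it then holds over an arbitrary field $\K$ by base change from $\Z$, which matches the generality of the paper's setting.
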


\begin{lemma}
For the shift algebra $S_1$, where $xs=sx+s$ holds, we obtain the following binomial theorem:
$$(x+s)^n=[x+s]^n+\sum_{k=0}^{n-1}\sum_{j=0}^{n-k-1}\binom{n}{k}S(n-k,j)x^ks^j$$
where $S(n,k)$ denote the Stirling numbers of the second kind.
\end{lemma}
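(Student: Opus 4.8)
The plan is to prove the following single closed formula, which is equivalent to the assertion: for all $n\ge 0$,
$$(x+s)^n = \sum_{k=0}^{n}\sum_{j=0}^{n-k}\binom{n}{k}S(n-k,j)\,x^k s^j. \qquad (\star)$$
Indeed, the diagonal terms $j=n-k$ contribute $\sum_{k=0}^n\binom{n}{k}S(n-k,n-k)\,x^ks^{n-k}=\sum_{k=0}^n\binom{n}{k}x^ks^{n-k}=[x+s]^n$, since $S(m,m)=1$; the remaining terms satisfy $0\le j\le n-k-1$ (which forces $k\le n-1$) and reproduce exactly the correction sum in the statement, the out-of-range values $S(n-k,j)=0$ for $j>n-k$ causing no harm. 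Thus it suffices to establish $(\star)$.

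The only multiplication rule I need is the special case of the formula for algebras of type $(1,0,\beta,0)$ from Table~1 with $\beta=1$, namely $s^jx=(x+j)s^j$, which gives
$$x^ks^j\cdot x = x^k(x+j)s^j = x^{k+1}s^j + j\,x^ks^j, \qquad x^ks^j\cdot s = x^ks^{j+1}.$$
First I would verify the base case $n=1$ of $(\star)$, which collapses to $x+s$. For the inductive step I would multiply $(\star)$ on the right by $(x+s)$ and collect the coefficient of a fixed standard monomial $x^Ks^J$ in $(x+s)^{n+1}$. Three sources contribute: the term $x^{k+1}s^j$ with $(k,j)=(K-1,J)$, the term $j\,x^ks^j$ with $(k,j)=(K,J)$, and the term $x^ks^{j+1}$ with $(k,j)=(K,J-1)$, yielding the total coefficient
$$\binom{n}{K-1}S(n-K+1,J) + J\binom{n}{K}S(n-K,J) + \binom{n}{K}S(n-K,J-1).$$

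It then remains to show that this equals the predicted level-$(n+1)$ coefficient $\binom{n+1}{K}S(n+1-K,J)$, and here lies the only genuine computation, which turns on the clean interplay of two defining recurrences: the Stirling recurrence $S(m+1,J)=J\,S(m,J)+S(m,J-1)$ and Pascal's rule $\binom{n+1}{K}=\binom{n}{K}+\binom{n}{K-1}$. Concretely, I would apply Pascal to $\binom{n+1}{K}$ and the Stirling recurrence to $S(n+1-K,J)$, expand the product $\bigl(\binom{n}{K}+\binom{n}{K-1}\bigr)\bigl(J\,S(n-K,J)+S(n-K,J-1)\bigr)$ into four terms, and then recombine the two terms carrying $\binom{n}{K-1}$ back via the Stirling recurrence into $\binom{n}{K-1}S(n-K+1,J)$; what survives is precisely the three-term sum above. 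This closes the induction, proving $(\star)$, after which the diagonal separation explained in the first paragraph delivers the stated binomial theorem. I expect the boundary bookkeeping — the trivial base case, the empty inner sum when $K=n+1$, and the vanishing of the out-of-range Stirling numbers — to be the only place demanding care, the algebraic heart being the matching of the Pascal and Stirling recurrences.
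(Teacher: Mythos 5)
Your proof is correct and is exactly the argument the paper has in mind: the paper omits the proof, remarking only that it ``can be done by induction, using the multiplication formulas,'' and that is precisely what you carry out --- the equivalent reformulation $(\star)$, right-multiplication by $x+s$ via $s^jx=(x+j)s^j$, and the observation that Pascal's rule interlocks with the Stirling recurrence $S(m+1,J)=J\,S(m,J)+S(m,J-1)$ to close the induction, with the boundary conventions you flag being harmless. One remark: the relation printed in the lemma, $xs=sx+s$, is a slip in the paper, since it contradicts the definition $S_1=\K\langle x,s\mid sx=xs+s\rangle$ of Theorem~\ref{MainThm} and Table~1, and the stated formula (with monomials $x^k s^j$) is true only under the latter convention --- which is the one your multiplication rule $s^jx=(x+j)s^j$ correctly encodes, so you proved the intended statement.
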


We omit the technical proofs for both Lemmas. They can be done by induction, using the multiplication
formulas.


\section{Application in Computer Algebra Implementation}

As described in \cite{LS}, a general multiplication in a non-commutative $G$-algebra
boils down to the multiplication of $y^m \cdot x^n$ for a couple of variables $x,y$
such that $x^a y^b$ is a standard word. 
In general, the polynomial $y^m \cdot x^n$ involves other variables as well,
but the case, when $x,y$ generate a subalgebra of the type $A(q,\alpha, \beta, \gamma)$,
appears often enough. Suppose from now on we are in such situation.

In \cite{LS} it has been proposed to address each pair of non-commuting and non-$q$-commuting
variables separately. To each such pair a matrix $M$ is assigned, such that $M_{ij}=y^i \cdot x^j$
is a polynomial, written in terms of standard monomials. There is a general multiplication algorithm,
which uses matrix entries of lower degree in order to compute the higher degrees on demand.

There are several different strategies on the usage of the formulas for enhancing the polynomial
multiplication. Of course, this problem barely has an analogue in the commutative case.
Initialization of non-commutative relation between $y$ and $x$ saves the relation $yx=q\cdot xy + \alpha x + \beta y + \gamma$ as a part of data structure on the algebra, where the computations take place.

\textbf{1. Faster computation, considerable memory usage}: As proposed in \cite{LS}, 
the results of all required multiplications $y^i \cdot x^j$ and the intermediate
multiplications in lower degree will be saved. Due to the same principles applied for
the search of previously computed elements of lower degree, the multiplication matrix
will be filled with many elements. On the other hand, the intermediate elements 
will be reused intensively and this leads to fast arithmetics in the algebra.

\textbf{2. Saving memory, slower computation}: All required multiplications $y^i \cdot x^j$ will
be done according to the formulas, the results will not be saved for the future use. 
Thus this way uses the least amount of memory, but can take much longer, especially if
many multiplications are requested repeatedly.

\textbf{3. Mixing 1 and 2 and using formulas}: Computing, by utilizing formulas, 
the requested elements and storing them into the multiplication matrix eliminates 
the need to compute and store intermediate elements from the approach 1. 
Storing the demanded elements increases the chances for the future reuse of matrix entries.
Still, there are more possibilities to develop strategies by mixing both approaches and
working with multiplication matrices dynamically, like keeping (e.~g. by periodic cleaning) 
higher degree part of the matrix sparse while being as dense as possible in the lower
degree part. But the question, how to determine the value, which distinguishes high degree
from low degree, is open. At last, but not at least, we have experimented with 
counting the requests to each matrix entry, thus having a metric for the usability of every single entry. This is useful while following the strategy, which uses periodic cleaning of multiplication matrices.

\textbf{Experiments}. 
Let us do experiments with the most general case: $A = A(q, \alpha, \beta, \gamma)$, 
where $q, \alpha, \beta, \gamma$ are transcendental over the base field.
As described above, in general the product $y^a \cdot x^b$ is computed (inductively) either as 
$y \cdot (y^{a-1} \cdot x^b)$ or as $(y^a \cdot x^{b-1}) \cdot x$.
Let us consider the products $y^i \cdot x$ and $y \cdot x^i$. The determination of
the computational method for these products can be made during run-time by
analyzing given $q, \alpha, \beta, \gamma$.
Both $y^i \cdot x$ and $y \cdot x^i$ are of the same length (with $2(i+1)$ terms),
with the same leading term and of the same internal byte-size. 
Counting the byte-size of both expressions for $i = 1..10, 15, 20$ we obtain $8, 21, 40, 65, 96, 133, 176, 225, 280, 341, 736, 1281$.
Indeed this sequence coincides with octagonal numbers\footnote{\url{http://oeis.org/A000567}} shifted by 1. Hence the byte-size $s(i)$ of $y^i\cdot x$ is $3i^2+5i+2$.

Further on we look for some computation-specific patterns, for instance, during the
computation of a left Gr\"obner basis. We use the implementation of 
\textit{slim Gr\"obner basis} algorithm in \textsc{Singular} for highly resource-demanding and tasks like the computation of Bernstein-Sato polynomials with two different algorithms.
With the latter algorithms
one computes in the tensor product of model algebras (Weyl and shift algebras),
what suggests using formulas. By cashing below we mean the use of the multiplication table
for saving once computed elements.

We experiment with the following strategies: 1. Using formulas and caching 2. Using formulas without caching the results 3. Caching the results, obtained without formulas. The timings 
and maximal memory usage are collected in the Table 3. 
Since we are interested in the efficiency of the caching, we count the requests to
compute every needed elementary product $y^a \cdot x^b$ as above in a separate computation. 
In the process of computation of Bernstein-Sato polynomial of various Reiffen curves $f(x,y)\in\K[x,y]$, there is a complicated computation in the 
tensor product of two Weyl algebras and $\K[s]$. 
We count requests to compute $y^m x^n=x^n y^m$ in Table 4 
and the number of requests to compute $\partial_x^m x^n$ in the first Weyl algebra, where $[\partial_x,x]=1$ holds, in Table 5. 
All the data is available online from \verb?http://www.mathematik.uni-kl.de/?$\sim$\verb?motsak/ncSAtests?.
Timings and memory are given in seconds resp. in Kb. The tests were run on a PC running 64 bit Arch Linux 2.6.38, having 16 GB RAM and Intel Core i7 CPU 860 at 2.80GHz (4 Cores/8 Threads).

We gain some speedup by using both caching and formulas. It can be enhanced by
optimizing the way of caching, especially for algebras with many variables.

\begin{table}
\label{CompMult}
\caption{Time and Memory Comparison for Different Multiplication Strategies}
\begin{center}
\begin{tabular}{|l|c|c|c|c|c|c|c|}
\hline
Name & \multicolumn{2}{|c|}{Cache + Formulas} & \multicolumn{2}{|c|}{Formulas only} & \multicolumn{2}{|c|}{Cache only} \\
\cline{2-7}
 & time & memory & time & memory & time & memory \\
\hline 
reiffen11-mod& 8.04& 9.912 & 8.14 & 11.999 & 7.97 & 9.912 \\
reiffen45-3-ann& 8525.40& 518.262 & 8517.73& 523.502 & 8547.37& 518.261 \\
reiffen45-6-ann& 158.25& 46.509 & 160.20 & 46.506 & 160.54& 46.509 \\
reiffen57-mod& 36.16& 11.422 & 39.28 & 11.415 & 36.41 & 11.422 \\
reiffen59-mod& 80.03& 15.759 & 87.90 & 15.746 & 81.03 & 15.759 \\
reiffen67-mod& 216.81& 45.369 & 261.30 & 45.361 & 222.15& 45.369 \\
reiffen68-mod& 117.58& 142.254 & 141.97 & 138.494 & 118.31& 142.254 \\
reiffen76-mod& 389.54& 99.026 & 630.48& 99.026 & 389.42& 99.026 \\
reiffen86-mod& 298.83& 69.610 & 453.93& 69.610 & 297.81& 69.610  \\
\hline
\end{tabular}
\end{center}
\end{table}



\begin{table}
\label{CountXY}
\caption{Number of Requests for $y^m x^n=x^n y^m$}
\begin{center}
\begin{tabular}{|l|c|c|c|c|c|c|c|c|c|c|c|}
\hline
 & $n$ & 1 & 2 & 3 & 4 & 5 & 6 & 7 & 8 & 9 & 10 \\
\cline{2-12}
$m$ & & & & & & & & & & & \\

\hline 
1 & 23711 & 18629 & 17628 & 14796 & 8368 & 2899 & 2444 & 1315 & 296 & 186 & 32 \\
2 & 8264 & 4952 & 4806 & 4947 & 2952 & 728 & 715 & 549 & 47 & 26 & 0 \\
3 & 4900 & 3002 & 3233 & 3202 & 1577 & 286 & 277 & 237 & 18 & 9 & 0 \\
4 & 2084 & 1230 & 1268 & 1189 & 585 & 104 & 82 & 60 & 6 & 3 & 0  \\
5 & 215 & 155 & 118 & 131 & 127 & 59 & 48 & 37 & 2 & 1 & 0  \\
6 & 62 & 45 & 30 & 30 & 26 & 6 & 3 & 0 & 0 & 0 & 0  \\
7 & 19 & 14 & 9 & 9 & 8 & 2 & 1 & 0 & 0 & 0 & 0  \\
8 & 3 & 2 & 1 & 1 & 1 & 0 & 0 & 0 & 0 & 0 & 0 \\
\hline
\end{tabular}
\end{center}
\end{table}

\begin{table}
\label{CountXDX}
\caption{Number of Requests for $\partial_x^m x^n = x^n \partial_x^m + \ldots$}
\begin{center}
\begin{tabular}{|l|c|c|c|c|c|c|c|c|c|c|c|c|}
\hline
 & $n$ & 1 & 2 & 3 & 4 & 5 & 6 & 7 & 8 & 9 & 10 & 11 \\
\cline{2-13}
$m$ & & & & & & & & & & & & \\
\hline 
1 & 27345 & 22324 & 21914 & 20484 & 14636 & 5702 & 4076 & 3104 & 1515 & 1005 & 563 & 164 \\
2 & 12627 & 10267 & 9799 & 9219 & 6910 & 2592 & 1888 & 1523 & 718 & 455 & 246 & 72 \\
3 & 4271 & 3319 & 2904 & 2895 & 2544 & 942 & 763 & 691 & 300 & 181 & 90 & 26 \\
4 & 1149 & 872 & 604 & 659 & 780 & 277 & 273 & 275 & 105 & 58 & 24 & 6 \\
5 & 247 & 203 & 50 & 79 & 224 & 54 & 65 & 83 & 26 & 11 & 0 & 0 \\
\hline
\end{tabular}
\end{center}
\end{table}

\newpage
\section{Centers and Ring-Theoretic Properties of Model Algebras}

By using the formulas, we compute explicitly the centers of non-commutative model algebras, depending on the ground field $\K$. Recall, that for some $f\in A$ one defines the centralizer 
subalgebra $C(f) = \{ a\in A \mid fa=af \} \supseteq \K[f]$. 
\begin{proposition}
\label{Centerz}
For the algebras of Lie type one has
\begin{itemize}
\item[$\bullet$] If $\Char \K=0$, $Z(A_1) = \K$ and  $Z(S_1)=\K$.
\item[$\bullet$] If $\Char \K=p$, $Z(A_1) = \K[x^p,\d^p]$ and $Z(S_1)=\K[x^p-x,s^p]$.
\end{itemize}
For the quantum algebras one has
\begin{itemize}
\item[$\bullet$] If $q$ is not a root of unity, $Z(\K_q[x,y]) = \K(q)$ and $Z(A_1^{(q)})=\K(q)$.
\item[$\bullet$] If $q$ is a primitive root of unity of order $p$ over $\K$, 
$Z(\K_q[x,y]) = \K(q)[x^p,y^p]$ and $Z(A_1^{(q)})=\K(q)[x^p,\d^p]$.
\end{itemize}
\end{proposition}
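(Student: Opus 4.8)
The plan is to use the PBW basis of Lemma~\ref{PBWA} together with the explicit commutation rules underlying Tables~1 and~2. Since each algebra is generated by its two variables, an element $z$ is central if and only if it commutes with both generators; so I would write $z=\sum_{a,b}c_{a,b}x^ay^b$ (respectively with $\d$ or $s$) in standard monomials and convert the two conditions $[z,x]=0$ and $[z,y]=0$ into linear conditions on the coefficients $c_{a,b}$. The first step is therefore to record, by an immediate induction from the defining relation, the commutator of a single standard monomial with each generator: for $A_1$ one has $[x^a\d^b,x]=b\,x^a\d^{b-1}$ and $[x^a\d^b,\d]=-a\,x^{a-1}\d^b$; for $S_1$, $[x^as^b,x]=b\,x^as^b$ and $[x^as^b,s]=\bigl(x^a-(x+1)^a\bigr)s^{b+1}$; for $\K_q[x,y]$, $[x^ay^b,x]=(q^b-1)x^{a+1}y^b$ and $[x^ay^b,y]=(1-q^a)x^ay^{b+1}$; and for $A_1^{(q)}$, $[x^a\d^b,x]=(q^b-1)x^{a+1}\d^b+[b]_q\,x^a\d^{b-1}$ together with the symmetric expression for $[x^a\d^b,\d]$.

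In the three cases $A_1$, $S_1$ and $\K_q[x,y]$ these conditions decouple monomial by monomial. For $A_1$, vanishing of $[z,x]$ and $[z,\d]$ forces $c_{a,b}\,b=0$ and $c_{a,b}\,a=0$; for $\K_q[x,y]$ it forces $c_{a,b}(q^b-1)=0$ and $c_{a,b}(q^a-1)=0$. The surviving scalars vanish precisely under the stated arithmetic conditions: the index must be $0$ when $\Char\K=0$ (resp.\ $q$ is not a root of unity), and a multiple of $p$ when $\Char\K=p$ (resp.\ $q$ is a primitive $p$-th root of unity). The shift algebra needs one extra move: $[z,x]=0$ first forces $c_{a,b}=0$ unless $b=0$ (char $0$) or $p\mid b$ (char $p$), and then, since $s^p$ is central, $[z,s]=0$ reduces to the requirement that each coefficient polynomial $f(x)$ of a power $s^{pj}$ satisfy $f(x)=f(x+1)$. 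Over characteristic $0$ this forces $f$ constant, whereas over characteristic $p$ the fixed ring of the translation $x\mapsto x+1$ is exactly $\K[x^p-x]$ (an Artin--Schreier phenomenon); this is the reason $x^p-x$, rather than $x^p$, enters $Z(S_1)$.

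The quantum Weyl algebra $A_1^{(q)}$ is the case where the two summands in each commutator raise and lower the total degree $a+b$ simultaneously, so the coefficient conditions no longer decouple. Here I would run a leading-term induction on the top total degree $D$ occurring in $z$. The degree-$(D+1)$ homogeneous part of $[z,x]$ can only come from the degree-raising summand applied to the top component $z_D$ (the lowering summand of $z_{D+2}$ is absent since $D$ is maximal), so its vanishing gives $c_{a,b}(q^b-1)=0$ on the support of $z_D$, and similarly $c_{a,b}(1-q^a)=0$ from $[z,\d]$. Thus $z_D$ is supported on monomials $x^a\d^b$ with $p\mid a$ and $p\mid b$ (or is zero of positive degree when $q$ is not a root of unity). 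Each such monomial is a product of the central elements $x^p,\d^p$, so $z_D$ is itself central and equals the leading term of a genuine element $P(x^p,\d^p)$; subtracting this lift strictly lowers the top degree, and induction closes the argument. The forward inclusion in every degenerate case is the direct centrality check $[\d,x^p]=[\d^p,x]=0$ (using $p\cdot x^{p-1}=0$ and $[p]_q=0$), and likewise for $y^p$ and $s^p$.

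I expect the genuine obstacle to be precisely $A_1^{(q)}$ at a root of unity: one must argue cleanly that only the degree-raising summand contributes at the very top degree (so lower-degree parts of $z$ cannot interfere with determining the support of $z_D$), and simultaneously track where $q^b-1$ and where $[b]_q$ vanish, which requires the combinatorics of $q$-numbers at a primitive $p$-th root of unity. The remaining three cases are routine once the basic commutators above and the leading-term/induction scheme are in place.
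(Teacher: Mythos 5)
Your proposal is correct, and its backbone is the same as the paper's: compute $Z$ as the intersection of the centralizers of the two generators, expand a candidate element in the PBW basis of Lemma~\ref{PBWA}, and feed in the commutation formula $\d^m x = q^m x \d^m + [m]_q \d^{m-1}$ from the multiplication tables. The difference is in the organizing device for the one genuinely coupled case, $A_1^{(q)}$, and in coverage. The paper writes out only the $q$-Weyl case (declaring the rest ``similar'') and resolves the coupling via the $\Z$-grading $\deg x = -1$, $\deg \d = +1$: both summands $(q^{\alpha}-1)x\d^{\alpha}$ and $[\alpha]_q\d^{\alpha-1}$ of $[\d^{\alpha},x]$ have the same weight, so $\operatorname{ad}(x)$ and $\operatorname{ad}(\d)$ are graded maps, $C(x)$, $C(\d)$ are graded subalgebras, and the paper computes them separately as $\K(q)[x,\d^p]$ and $\K(q)[x^p,\d]$ before intersecting. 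You instead run an induction on the top total degree, using that only the degree-raising summands can reach the top degree of the commutator; this is equally valid and arguably more elementary and self-contained (the paper's step ``so for all $\alpha$ \dots'' silently uses the grading to rule out cross-cancellation, which your leading-form argument makes explicit), at the price of not exhibiting the individual centralizers. What your writeup adds over the paper is the explicit treatment of $S_1$: there, ``similar'' is a slight overstatement, since the answer $\K[x^p-x,s^p]$ genuinely requires the Artin--Schreier fixed-ring computation $\K[x]^{x\mapsto x+1}=\K[x^p-x]$ that you supply, and this is not a formal analogue of the $x^p$, $\d^p$ computations in the Weyl cases. So: same strategy at the core, a different (filtration versus grading) mechanism for the key case, and a more complete case analysis on your side.
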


\begin{proof}
Since all the proofs are similar, let us consider the $q$-Weyl algebra $A_1^{(q)}$. We compute the center as the intersection of two centralizers 
$Z(A) = C(x) \cap C(\d)$.
Since $A$ is a $\Z$-graded algebra (e.~g.~ with $\deg x=-1$, $\deg d = 1$), 
$C(x), C(\d)$ and $Z(A)$
are $\Z$-graded subalgebras. The $0$-th graded part of $A$ is 
$\K(q)[x\d]$. For $k\in\Z_+$, the $k$-th graded part of $A$ is $A_k=\K(q)[x\d] \d^k$ and 
$A_{-k}=\K(q)[x\d] x^k$.
By Thm. \ref{MainThm}, we see that
$\d^m x = q^m x \d^m + [m]_q \d^{m-1} \in A_{m-1}$ is homogeneous of degree $m-1$. 
Thus for $f = \sum_\alpha c_{\alpha}(x) \d^{\alpha}$
one has $0=fx-xf= \sum_\alpha c_{\alpha}(x) (\d^{\alpha} x - x \d^{\alpha})$.
Note, that $\d^{\alpha} x - x \d^{\alpha} = (q^{\alpha}-1) x \d^{\alpha} + [\alpha]_q \d^{\alpha-1}$ is graded. 
So, for all $\alpha$ $(q^{\alpha}-1) x \d^{\alpha} + [\alpha]_q \d^{\alpha-1} = 0$ 
, that is $q^{\alpha}=1$ for all $\alpha$. 
Hence $q^p=1$ implies
$C(x)=\K(q)[x,\d^p]$, $C(\d)=\K(q)[x^p,\d]$ 
and thus $Z(A) = \K(q)[x^p,\d^p]$.
\end{proof}


It is known, that over any field $A(q,\alpha, \beta, \gamma)$ is a $G$-algebra (or a PBW algebra) 
\cite{LS,BGV}. Thus it is a Noetherian domain of Gel'fand-Kirillov dimension 2, which is Cohen-Macaulay and Auslander-regular \cite{BGV}.
However, the global homological dimension is between 1 and 2.

\begin{proposition}
\label{gldim}
$\gldim A(q,\alpha, \beta, \gamma) = 1$ if and only if 
$\Char \K = 0$ and $A(q,\alpha, \beta, \gamma)$ is isomorphic to
the Weyl algebra.
\end{proposition}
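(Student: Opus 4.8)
The plan is to use that global dimension is invariant under algebra isomorphism, so by Theorem~\ref{MainThm} it suffices to compute $\gldim$ for each of the five model algebras, treating the two characteristic cases of the Weyl algebra separately. Recall from the discussion preceding the statement that each $A(q,\alpha,\beta,\gamma)$ is Auslander-regular of Gel'fand--Kirillov dimension $2$, so that $1\le\gldim\le 2$ throughout; the task is therefore only to decide, for each model, whether the value is $1$ or $2$. The implication ``$\Leftarrow$'' is the classical fact that the first Weyl algebra $A_1$ over a field of characteristic $0$ is hereditary, $\gldim A_1=1$, which I would simply cite. The whole content is the converse: I must show that every model algebra \emph{other than} the characteristic-$0$ Weyl algebra has global dimension exactly $2$.

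For this I would exploit that each of the five models is an Ore extension $\K[x][z;\sigma,\delta]$ of the principal ideal domain $\K[x]$ (with $\gldim\K[x]=1$): namely $\sigma=\mathrm{id}$, $\delta=0$ for $\K[x,y]$; $\sigma=\mathrm{id}$, $\delta=\tfrac{d}{dx}$ for $A_1$; $\sigma\colon x\mapsto x+1$, $\delta=0$ for $S_1$; $\sigma\colon x\mapsto qx$, $\delta=0$ for $\K_q[x,y]$; and $\sigma\colon x\mapsto qx$, $\delta\colon x\mapsto 1$ for $A_1^{(q)}$. The theorem on global dimension of Ore extensions gives the bound $\gldim\le\gldim\K[x]+1=2$ and, in its sharp form, tells us when the value drops to $\gldim\K[x]=1$: only in the genuine differential-operator situation $\sigma=\mathrm{id}$, $\delta\ne 0$, and then only in characteristic $0$. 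Thus the three models with $\delta=0$ (the commutative, $q$-commutative and shift algebras) have $\gldim=2$ in every characteristic; the model $A_1^{(q)}$, having $\sigma\ne\mathrm{id}$, also has $\gldim=2$; and $A_1$ realizes the exceptional drop exactly when $\Char\K=0$, giving $\gldim A_1=1$ in characteristic $0$ and $\gldim A_1=2$ in characteristic $p$.

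As an independent check for the cases with a large centre I would use Proposition~\ref{Centerz}: in characteristic $p$ the Weyl algebra is a finitely generated, Azumaya module over the central polynomial ring $\K[x^p,\d^p]$, whence $\gldim A_1=\gldim\K[x^p,\d^p]=2$; the same argument handles $A_1^{(q)}$ and $\K_q[x,y]$ when $q$ is a root of unity. Combining all cases, $\gldim A(q,\alpha,\beta,\gamma)=1$ forces the algebra to be the characteristic-$0$ Weyl algebra, and conversely. The main obstacle, in my view, is justifying the \emph{sharp} form of the Ore-extension criterion, i.e.\ that the global dimension really stays equal to $2$ for $A_1^{(q)}$ (and in the root-of-unity and characteristic-$p$ cases) and drops to $1$ \emph{only} for the classical Weyl algebra in characteristic $0$. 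A tempting shortcut---observing that for $q\ne 1$ the algebra $A_1^{(q)}$ admits one-dimensional representations (since $1-q$ is invertible, $x\mapsto a$, $\d\mapsto b$ with $(1-q)ab=1$ solves the relation), unlike the simple Weyl algebra---does \emph{not} by itself suffice, because over a principal ideal domain a finite-length module still has projective dimension $1$; one must instead establish non-heredity directly, by showing that the relevant two-generated left ideal (the first syzygy of such a module) fails to be projective, or by the Azumaya/centre argument above. Carrying out that non-projectivity computation, or invoking the precise Ore-extension theorem with its exceptional case, is the delicate point.
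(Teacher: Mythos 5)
Your reduction to the five models and your treatment of the $\delta=0$ cases are sound: for a skew polynomial ring $R[z;\sigma]$ with $\sigma$ an automorphism and zero derivation, $\gldim R[z;\sigma]=\gldim R+1$ is a genuine theorem (see \cite{MR}), and the centre argument disposes of the Weyl algebra in characteristic $p$. The genuine gap is the $q$-Weyl algebra $A_1^{(q)}$. There is no ``sharp form'' of the Ore-extension theorem that you can cite for the mixed case $\sigma\neq\mathrm{id}$, $\delta\neq 0$: the standard results give only the two-sided bound $\gldim R\le\gldim R[z;\sigma,\delta]\le\gldim R+1$, and deciding which value occurs is exactly the problem to be solved. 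Your fallback via the centre works only when $q$ is a root of unity; when $q$ is not a root of unity the centre of $A_1^{(q)}$ is trivial by Prop.~\ref{Centerz}, so that route is closed. You flag this case yourself as ``the delicate point,'' but flagging it does not prove it, and as written the ``only if'' direction is unproved precisely for $A_1^{(q)}$.

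The irony is that the ``tempting shortcut'' you dismiss is the paper's actual proof, and it is valid. The ingredient missing from your objection is the Cohen--Macaulay property (with respect to Gel'fand--Kirillov dimension), which holds for every algebra $A(q,\alpha,\beta,\gamma)$ and which you already invoked to get $\gldim\le 2$: for every nonzero finitely generated module $M$ one has $j(M)+\GKdim M=\GKdim A=2$, where $j(M)=\min\{i:\Ext^i_A(M,A)\neq 0\}$ is the grade. A nonzero finite-dimensional module therefore has grade $2$, so $\Ext^2_A(M,A)\neq 0$, hence projective dimension $2$, hence $\gldim A=2$. Your principal-ideal-domain analogy does not refute this: over $\K[x]$ one has $\GKdim\K[x]=1$, so the Cohen--Macaulay equality only forces grade $1$ on finite-length modules, consistent with heredity; it is the combination ``finite-dimensional module {\em plus} $\GKdim A=2$ {\em plus} Cohen--Macaulay'' that forces $\gldim=2$. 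With this criterion all cases are handled uniformly: $\K[x,y]$, $\K_q[x,y]$, $S_1$ and $A_1^{(q)}$ admit one-dimensional modules (for $A_1^{(q)}$ take $x\mapsto a$, $\d\mapsto b$ with $(1-q)ab=1$, the very representation you exhibit), the characteristic-$p$ Weyl algebra has the finite-dimensional quotient $A_1/\langle x^p,\d^p\rangle$, and in characteristic $0$ the trace argument shows $A_1$ has no finite-dimensional modules at all, giving $\gldim A_1=1$. If you insist on avoiding the Cohen--Macaulay criterion, you must close the $A_1^{(q)}$ gap by hand, e.g.\ by exhibiting the resolution $0\to A\to A^2\to A\to M\to 0$ of the one-dimensional module $M$ (the middle syzygy is generated by $\bigl(\d-qb,\,-(qx-a)\bigr)$) and checking directly that $\Ext^2_A(M,A)\cong A/\bigl((\d-qb)A+(qx-a)A\bigr)\neq 0$, since the latter right ideal is annihilated by the one-dimensional right representation $x\mapsto a/q$, $\d\mapsto qb$.
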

\begin{proof}
Let $A = A(q,\alpha, \beta, \gamma)$. Because of Cohen-Macaulay property, $\gldim A = 2$ if and only if there exist a module $M$ of finite dimension over $\K(q)$. We look for $M=A/L$ for an ideal $L\subset A$.
Over $\K[x,y]$ and $\K_q[x,y]$ all 1-dimensional modules are described by ideals $\langle x-a, y-b \rangle$ for $a,b\in\K(q)$. In the shift algebra there are ideals $\langle x-a, s \rangle$ for $a\in\K$ while in the $q$-Weyl algebra these ideals are $\langle x-a, y- ((1-q)a)^{-1} \rangle$ for $a\in\K(q)^*$.\\
Consider the case when $A$ is the Weyl algebra.
If $\Char \K = p >0$, from Prop. \ref{Centerz} follows, that $I_p=\langle x^p,\d^p \rangle$ is a proper two-sided ideal and $A/I_p$ is finite dimensional, thus $\gldim A = 2$.
Now let $\Char \K = 0$. A module of Gel'fand-Kirillov dimension 0 is finite-dimensional over $\K$ and hence can be considered as a representation of an algebra. 
 Assume there exists some $m\in\N$ and two matrices $X,D \in Mat(m,\K)$ such that $x\mapsto X, \d \mapsto D$ is a representation, in other words a homomorphism of left $A_1$-modules. Thus $DX-XD=\id_m$ holds and also $0=tr(DX-XD)=m$, what is a contradiction, showing that there are no finite dimensional modules and the minimal Gel'fand-Kirillov dimension of a module over $A_1$ is thus 1 (which encompasses for instance holonomic left modules $A_1/\langle x \rangle$ and $A_1/\langle \d \rangle$).
\end{proof}

\begin{lemma}
\label{A1toKq}
For any field $\K$, there are no nonzero $\K$-algebra homomorphisms from $A_1(\K)$ to $\K_q[x,y]$ or to $\K[x,y]$.
\end{lemma}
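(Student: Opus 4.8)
The plan is to play the single defining relation $\d x - x\d = 1$ of $A_1(\K)$ against the fact that both target algebras possess a nonzero \emph{commutative} quotient. First I would dispose of the word ``nonzero'' by reducing to unital homomorphisms. If $\phi$ is a $\K$-algebra homomorphism into $B$, where $B$ is $\K_q[x,y]$ or $\K[x,y]$, then $e:=\phi(1)$ satisfies $e^2=e$. Both targets are $G$-algebras, hence Noetherian domains as recalled before Prop.~\ref{gldim}, so their only idempotents are $0$ and $1$. If $e=0$ then $\phi(a)=\phi(a\cdot 1)=\phi(a)\,e=0$ for every $a$, so $\phi$ is the zero map; a nonzero $\phi$ therefore forces $\phi(1)=1$, i.e.\ $\phi$ is unital.

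Next I would produce the commutative quotient explicitly. Sending $y\mapsto 0$ and $x\mapsto x$ defines a surjective $\K$-algebra homomorphism $\psi$ from $\K_q[x,y]$ onto $\K(q)[x]$ (respectively from $\K[x,y]$ onto $\K[x]$): the relation $yx-qxy$ (resp.\ $yx-xy$) maps to $0$, the standard monomials $x^iy^j$ with $j\geq 1$ are killed, and what survives is the nonzero commutative ring of polynomials in $x$. Given any nonzero $\phi\colon A_1(\K)\to B$, I would compose to obtain $\psi\circ\phi\colon A_1(\K)\to\K(q)[x]$. Evaluating it on the Weyl relation gives, on the one hand, $\psi\phi(\d x-x\d)=\psi\phi(1)=1$, and on the other hand $\psi\phi(\d)\,\psi\phi(x)-\psi\phi(x)\,\psi\phi(\d)=0$ because $\K(q)[x]$ is commutative. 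Hence $1=0$ in $\K(q)[x]$, which is absurd, and no nonzero $\phi$ can exist.

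The argument is uniform: it uses neither the value of $q$ (root of unity or not) nor $\Char\K$, since the obstruction lives already in the commutative quotient and depends only on the relation defining $A_1(\K)$. Consequently there is really no hard analytic step here; the only point demanding care is the preliminary reduction to the unital case, where the domain property of the targets is essential, together with the verification that killing $y$ indeed yields a well-defined \emph{nonzero} commutative quotient. Everything after that is formal. A heavier alternative would compare the top-degree parts of $\phi(x)$ and $\phi(\d)$ in the natural grading of $\K_q[x,y]$ and analyse when two homogeneous elements commute, but the quotient argument avoids all such case distinctions and is the route I would take.
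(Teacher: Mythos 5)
Your proof is correct, but it follows a genuinely different route from the paper's. The paper argues directly inside the target: writing $D=\phi(\d)$ and $X=\phi(x)$ in the standard monomial basis of $\K_q[x,y]$, it computes $DX-XD=\sum_{\alpha,\beta}c_\alpha d_\beta(q^{\beta_1\alpha_2}-q^{\beta_2\alpha_1})\,x^{\alpha_1+\beta_1}y^{\alpha_2+\beta_2}$ and observes that the coefficient of $x^0y^0$ in any such commutator vanishes, so $DX-XD=1$ is impossible; the case $\K[x,y]$ is then dispatched as the ``limit $q\to 1$,'' where the commutator is identically zero. You instead push the relation into a commutative quotient: killing $y$ gives a well-defined surjection $\K_q[x,y]\to\K(q)[x]$, and composing with it turns $\d x-x\d=1$ into $0=1$. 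Your argument buys uniformity (both targets and all values of $q$ and $\Char\K$ are handled by one computation-free step) and it explicitly patches a point the paper leaves implicit: the paper silently assumes $\phi(1)=1$, whereas your idempotent-in-a-domain reduction legitimately converts ``nonzero'' into ``unital,'' which is exactly what the statement as phrased requires. What the paper's computation buys in exchange is more information: it shows that the constant term of \emph{every} commutator in $\K_q[x,y]$ vanishes (a fact in the spirit of the paper's explicit multiplication formulas), not merely that the image of the Weyl relation dies in one particular quotient. Both proofs are complete and characteristic-free; yours is the more conceptual and self-contained of the two.
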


 \begin{proof}
 Assume there is a homomorphism of $\K$-algebras $\phi:A_1(\K) \to \K_q[x,y]$. Thus there exists 
 $X = \phi(x), D=\phi(\d) \in \K_q[x,y]$, such that $DX-XD=1$. 
 Write $D=\sum_{\alpha} c_{\alpha} x^{\alpha_1} y^{\alpha_2}$ for $c_{\alpha} \in\K$ and $\N_0^2 \ni \alpha = (\alpha_1,\alpha_2)$. Analogously 
 $X=\sum_{\beta} d_{\beta} x^{\beta_1} y^{\beta_2}$.
Then in $\K_q[x,y]$ one has $DX-XD = \sum_{\alpha, \beta} c_{\alpha} d_{\beta}(q^{\beta_1 \alpha_2} - q^{\beta_2 \alpha_1}) x^{\alpha_1+\beta_1} y^{\alpha_2 + \beta_2}$ and the coefficient by $1=x^0 y^0$ vanishes. In the limit $q\to 1$, that is in $\K[x,y]$ we obtain $DX-XD=0$.
Hence the only homomorphism from $A_1(\K)$ to $\K[x,y]$ or to $\K_q[x,y]$ is 0.
 \end{proof}


\begin{proposition}
\label{Modelz}
Five model algebras are pairwise non-isomorphic over any field.
\end{proposition}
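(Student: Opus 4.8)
The plan is to distinguish the five model algebras pairwise by finding algebraic invariants that are preserved under isomorphism and that take distinct values on the various algebras. The cleanest separating invariants available are (i) the center $Z(A)$, computed in Proposition \ref{Centerz}, (ii) commutativity, and (iii) the existence of nonzero algebra homomorphisms from the Weyl algebra, supplied by Lemma \ref{A1toKq}. I would organize the argument by first splitting off the commutative case, then separating the two Lie-type noncommutative algebras from each other and from the quantum ones, and finally separating the two quantum algebras.

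First I would dispose of the commutative algebra $\K[x,y]$: it is the unique abelian algebra in the list, since each of $A_1$, $S_1$, $\K_q[x,y]$ (for $q\neq 1$), and $A_1^{(q)}$ has a defining relation exhibiting noncommuting generators, and any isomorphism preserves commutativity. Next I would separate $A_1$ from $S_1$ using characteristic-independent structural features: in characteristic $0$ both have center $\K$ by Proposition \ref{Centerz}, so the center alone does not suffice, but Lemma \ref{A1toKq} gives a handle, and more directly one observes that $S_1$ contains a nonzero normal (two-sided-ideal-generating) element $s$ with $xs = sx + s$, i.e.\ $s$ is a non-unit on which $x$ acts by a nontrivial inner automorphism, whereas $A_1$ is simple in characteristic $0$ and has no two-sided ideals; this simplicity gap distinguishes them. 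To make this uniform across characteristics I would instead compare centers in characteristic $p$: $Z(A_1)=\K[x^p,\d^p]$ is a polynomial ring in two algebraically independent generators, while $Z(S_1)=\K[x^p-x,s^p]$ is likewise a polynomial ring in two generators, so here I would fall back on the ideal-structure argument, noting that $S_1$ has the proper two-sided ideal $\langle s\rangle$ (since $s$ is normal) whereas this fails for the Weyl algebra.

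The quantum algebras $\K_q[x,y]$ and $A_1^{(q)}$ are separated from the Lie-type ones by the presence of the scalar $q\neq 1$ in the relation, which I would detect invariantly through the center or through $q$-commutation identities rather than by the literal relation. They are separated from each other by Lemma \ref{A1toKq}: there is a nonzero homomorphism $A_1\to A_1^{(q)}$ available only under conditions that fail for $\K_q[x,y]$, and more to the point $\K_q[x,y]$ admits the one-dimensional representations $\langle x-a,y-b\rangle$ for all $a,b$ whereas $A_1^{(q)}$ admits them only for $a\in\K(q)^*$ with $b=((1-q)a)^{-1}$, as recorded in the proof of Proposition \ref{gldim}; counting or parametrizing one-dimensional modules is an isomorphism invariant. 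Finally, when $q$ is a primitive $p$-th root of unity, the centers $Z(\K_q[x,y])=\K(q)[x^p,y^p]$ and $Z(A_1^{(q)})=\K(q)[x^p,\d^p]$ are both polynomial rings in two variables, so again I would use the $1$-dimensional module count to separate them.

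The main obstacle, as the above indicates, is that the center is not a complete invariant: in several regimes two distinct model algebras share isomorphic centers (both $\K$, or both polynomial rings in two generators), so the proof cannot rest on centers alone. The real work is to assemble a panel of invariants—commutativity, the value of $q$ read off invariantly, the set of one-dimensional representations, the global dimension from Proposition \ref{gldim}, and the homomorphism obstruction of Lemma \ref{A1toKq}—and to verify that for each of the $\binom{5}{2}=10$ pairs at least one invariant differs, uniformly in the characteristic of $\K$ and in whether $q$ is a root of unity. I expect the subtlest pair to be $A_1$ versus $A_1^{(q)}$ at a $p$-th root of unity in characteristic $0$, where centers coincide in form and where one must argue carefully that the $q$-commutation $\partial x = q x\partial + 1$ cannot be transformed into the classical Weyl relation; here the parametrization of one-dimensional modules (nonexistent for $A_1$, a full family for $A_1^{(q)}$) is the decisive invariant.
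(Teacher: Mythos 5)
Your overall plan---commutativity, centers, global dimension, Lemma \ref{A1toKq}, and the variety of one-dimensional representations---is the same panel of invariants the paper assembles, and in characteristic $0$ your treatment of $A_1$ vs.\ $S_1$ via simplicity is correct. But your positive-characteristic step contains a genuine error: you claim that in $\Char \K = p$ the Weyl algebra has no proper two-sided ideals, so that the ideal $\langle s\rangle \subset S_1$ separates the two. This is false. In characteristic $p$ the Weyl algebra is far from simple: by Proposition \ref{Centerz} its center is $\K[x^p,\d^p]$, and $\langle x^p,\d^p\rangle$ is a proper two-sided ideal with finite-dimensional quotient---exactly the ideal the paper exhibits in the proof of Proposition \ref{gldim}. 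So the ideal-structure invariant does not distinguish $A_1$ from $S_1$ in characteristic $p$, and that pair is left unproved in your write-up. The repair, which the paper uses and which is already in your own toolkit, is the trace argument: for any $m$-dimensional representation $x \mapsto X$, $\d \mapsto D$ one has $m = \tr(DX - XD) = 0$ in $\K$, hence $p \mid m$; in particular $A_1$ has no one-dimensional representations in any characteristic, whereas $S_1$ (and every other model algebra) has them. You invoke exactly this fact for the pair $A_1$ vs.\ $A_1^{(q)}$, so you need only apply it uniformly instead of simplicity.

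Two smaller points. First, your claim that $\K_q[x,y]$ admits one-dimensional representations $\langle x-a, y-b\rangle$ for \emph{all} $a,b$ is wrong for $q \neq 1$: the relation forces $(1-q)ab = 0$, so the representation variety is the union of the two coordinate lines (the paper writes $J = \langle (1-q)cd\rangle = \langle c\rangle \cap \langle d\rangle$); your conclusion against the hyperbola of $A_1^{(q)}$ survives, but only via this corrected description. Second, your proposal to separate the quantum algebras from the Lie-type ones ``through the center'' fails in the main case: if $\Char \K = 0$ and $q$ is not a root of unity, then $S_1$, $\K_q[x,y]$ and $A_1^{(q)}$ all have trivial center by Proposition \ref{Centerz}. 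The pairs $S_1$ vs.\ $\K_q[x,y]$ and $S_1$ vs.\ $A_1^{(q)}$ must again be settled by the one-dimensional representation varieties (a line, versus two crossing lines, versus a hyperbola), which is what the paper does. When comparing these varieties, note that ``counting'' is not enough over an infinite field (a line and a hyperbola are then in bijection as sets); the robust statement is that an algebra isomorphism induces an isomorphism of the corresponding coordinate rings, e.g.\ $\K[x]$, $\K[c,d]/\langle cd\rangle$ and $\K[a,a^{-1}]$, which are pairwise non-isomorphic.
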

\begin{proof}
Let $\Char \K = 0$. From Prop. \ref{Centerz} we see that $A_1(\K),S_1(\K),A_1^{(q)}(\K) \not\cong \K[x,y]$. By Prop. \ref{gldim} and Lemma \ref{A1toKq} we conclude $S_1(\K),\K_q[x,y], A_1^{(q)}(\K) \not\cong A_1(\K)$.
For any field $\K$, $A_1^{(q)}(\K) \not\cong \K_q[x,y]$: 
let $U,V$ be affine subspaces of $\K^2$
of all 1-dimensional (thus irreducible) 
representations of both algebras in $\K$. Then
$U,V$ are zero sets of corresponding ideals $I=\langle (1-q)ab+1 \rangle$
and $J=\langle (1-q)cd \rangle = \langle c \rangle \cap \langle d \rangle$,
what implies $\K[U]\not\cong \K[V]$.
Since the variety $W\subset \K^2$ 
of 1-dimensional representations of $S_1$ is $W=\{(a,b) \mid ba=ab+b\} = \{(a,0) \mid a \in\K\}$ cannot be in bijection with either $U$ or $V$, 
$S_1$ is not isomorphic to $A_1^{(q)}(\K)$ or $\K_q[x,y]$.
 Also $\K[x,y]$ with $\K^2$ as the variety of 
1-dimensional representations
is not isomorphic to other model algebras for any $\K$.
%
Now, let $\Char \K = p$. Then $A_1(\K)$ has finite dimensional representations
since 
$m = \tr(1_{m \times m}) = \tr(DX-XD)=0$ for a $m \times m$ representation $X,D$ of $A_1(\K)$. Hence $p\mid m$ and the smallest irreducible representation is in dimension $p$. Thus $A_1(\K)$ cannot be isomorphic to other model algebras.
The remaining cases can be proved analogously.
\end{proof}
\noindent
\textbf{Future work} includes the study of Ore localizations of model algebras,  
for which no analog of the "five models" theorem is not known yet. 
Groups of linear endomorphisms and sets of linear antiendomorphisms of model
algebras are of interest as well. \\
By performing Ore localization on model algebras one gets interesting and ubiquitous algebras, for which the "five models" theorem is not known yet.\\
\noindent
\textbf{Acknowledgments}. The authors are grateful to 
Hans Sch\"onemann and Oleksandr Yena 
for discussions on the subject. We would like 
to thank anonymous referees for valuable suggestions.
The second author was supported by the Austrian Science Fund (FWF): P20162-N18.
The first and third authors are grateful to the SCIEnce project (Transnational access) at RISC for supporting their visits to RISC and the usage of computational infrastructure at RISC.

\begin{thebibliography}{99}
\bibitem{BGV}
J.~Bueso, J.~G{\'o}mez-Torrecillas, and A.~Verschoren.
\newblock {\em Algorithmic methods in non-commutative algebra. Applications to
  quantum groups}.
\newblock Kluwer Acad. Publ., 2003.

\bibitem{Mgfun}
F.~Chyzak and B.~Salvy.
\newblock Non--commutative elimination in {O}re algebras proves multivariate
  identities.
\newblock {\em J. Symbolic Computation}, 26(2):187--227, 1998.

\bibitem{Dix}
J.~Dixmier.
\newblock {\em Enveloping Algebras.}
\newblock AMS, 1996.

\bibitem{Gr}
E.~Green.
\newblock Multiplicative Bases, {G}r\"obner Bases, and Right {G}r\"obner Bases.
\newblock {\em J. Symbolic Computation}, 29(4-5):601--623, 2000.

\bibitem{HolFun}
C.~Koutschan.
\newblock {HolonomicFunctions (User's Guide)}.
\newblock Technical Report 10-01, RISC Report Series, University of Linz,
  Austria, 2010.

\bibitem{LS}
V.~Levandovskyy and H.~Sch{\"o}nemann.
\newblock Plural --- a computer algebra system for noncommutative polynomial
  algebras.
\newblock In {\em Proc. ISSAC}, 176--183. ACM Press, 2003.

\bibitem{MR}
J.~McConnell and J.~Robson.
\newblock {\em Noncommutative Noetherian rings.}
\newblock AMS, 2001.

\bibitem{Mora}
T.~Mora
\newblock An introduction to commutative and non-commutative {G}r\"obner bases.
\newblock {\em Theor. Comp. Sci.}, 134:131--173, 1994.

\bibitem{Ufn98}
V.~Ufnarovski
\newblock Introduction to noncommutative {G}r\"obner bases theory.
\newblock In B.~Buchberger and F.~Winkler (eds.) 
  {\em {G}r\"obner bases and applications}, 259--280, 1998.
\end{thebibliography}

\end{document}